\documentclass[11pt]{article}

\usepackage{color}

\usepackage{setspace}
\usepackage{natbib}
\usepackage{xparse}
\usepackage{amsmath}
\usepackage{color}
\usepackage{amssymb}
\usepackage{amsfonts}
\usepackage{multirow}
\usepackage{amsthm}
\usepackage{qtree}
\usepackage{booktabs}
\usepackage{changepage}
\usepackage{verbatim}
\usepackage{subcaption}
\usepackage{mathrsfs}
\usepackage[shortlabels]{enumitem}
\usepackage{tikz}
\usetikzlibrary{shapes,arrows,positioning}
\usepackage{dsfont}
\usepackage{graphics}
\usepackage{boldline}
\usepackage{graphicx}
\usepackage[margin=1in]{geometry}
\usepackage{natbib}

\newtheorem{thm}{Theorem}
\newtheorem{lemma}{Lemma}

\newtheorem{remark}{Remark}

\newtheorem*{example*}{Example \DO}

\newcommand{\vW}{{\bf W}}
\newcommand{\vone}{\boldsymbol{1}}

\newcommand{\vm}{{\bf m}}
\newcommand{\vt}{{\bf t}}

\newcommand{\vx}{{\bf x}}

\newcommand{\vX}{{\bf X}}

\newcommand{\vv}{{\bf v}}
\newcommand{\vA}{{\bf A}}
\newcommand{\vK}{{\bf K}}

\newcommand{\vY}{{\bf Y}}
\newcommand{\vV}{{\bf V}}

\newcommand{\vepsilon}{\mbox{\boldmath $\epsilon$}}
\newcommand{\vtheta}{\mbox{\boldmath $\theta$}}
\newcommand{\vnu}{\mbox{\boldmath $\nu$}}

\newcommand{\valpha}{\mbox{\boldmath $\alpha$}}

\newcommand{\pkonv}{\stackrel{p}{\rightarrow}}
\newcommand{\dkonv}{\stackrel{d}{\rightarrow}}

\newcommand{\bqa}{\begin{eqnarray*}}
\newcommand{\eqa}{\end{eqnarray*}}
\newcommand{\bqan}{\begin{eqnarray}}
\newcommand{\eqan}{\end{eqnarray}}
\newcommand{\bit}{\begin{itemize}}
\newcommand{\eit}{\end{itemize}}
\newcommand{\ben}{\begin{enumerate}}
\newcommand{\een}{\end{enumerate}}
\newcommand{\beq}{\begin{equation}}
\newcommand{\eeq}{\end{equation}}
\newcommand{\bdes}{\begin{description}}
\newcommand{\edes}{\end{description}}

\usepackage{graphicx} 
\newtheorem{theorem}{Theorem}

\title{Nonparametric Testing and Variable Selection for ARCH-$m(X)$ Model}
\author{Adriano Zambom and Qing Wang}
\date{}

\begin{document}

\maketitle

\begin{abstract}
    We introduce the ARCH-$m(X)$ model, a semiparametric extension of the ARCH-$X$
framework in which the effect of a multivariate exogenous covariate vector
$\mathbf{X} \in \mathbb{R}^d$ on the conditional variance is modeled through
an unknown nonparametric function $m(\cdot)$, accommodating complex nonlinear
relationships between external predictors and financial volatility. Within this model, we develop a novel hypothesis test for the significance of covariates constructed with an artificial one-way ANOVA. Under some regularity
conditions, the test statistic is shown to converge in distribution to the standard
Normal. 
Another key contribution of this paper is the construction of a variable selection procedure based on the Benjamini-Yekutieli false discovery rate correction applied to covariate-level $p$-values. We show that the resulting index set coincides with the true set of relevant covariates with probability tending to one as $n \to \infty$.
Extensive simulations confirm that the proposed
methods outperform existing competitors, and an empirical application to
S\&P~500 return volatility illustrates the practical utility of the proposed variable selection framework.

\end{abstract}

\section{Introduction}

The Autoregressive Conditional Heteroscedasticity (ARCH) model introduced by \cite{Engle1982} and its generalization to the 
Generalized Autoregressive Conditional Heteroskedasticity (GARCH) framework \citep{Bollerslev1986} have become the foundational tools for modeling time-varying volatility in financial time series. Their theoretical development and empirical relevance continue to attract substantial research attention; see, for example, \cite{Andersen2003, Hansen2005, BollerslevEtAl2016, FrancqZakoian2019, FrancqZakoian2022} and references therein.

A growing body of empirical evidence suggests that financial volatility is often systematically driven by exogenous factors beyond the endogenous dynamics captured by standard ARCH and GARCH specifications \citep{EnglePatton2001}. Variables such as macroeconomic indicators and commodity prices have been shown to significantly influence volatility in equity markets and currency exchange rates \citep{sidorov2014garch, sadik2018news, chua2019information}. This has motivated the ARCH-$X$ and GARCH-$X$ class of models, in which exogenous covariates enter the conditional variance equation directly. The theoretical foundations for inference in these models have attracted recent interest \citep{FrancqZakoian2007, HanPark2012, HanKristensen2014, Han2015, FrancqThieu2019, Sucarrat2021, WuEtAl2024}, 
and an expanding literature continues to identify new applications for these models
\citep{FernandesEtAl2014, ConradEtAl2018, WangEtAl2020, AudrinoEtAl2020, YamakaEtAl2023}.

A fundamental limitation of existing ARCH-$X$ and GARCH-$X$ formulations is that the relationship between exogenous covariates and the conditional variance is assumed to belong to a prescribed parametric family, typically linear. This restriction may be overly conservative: financial volatility can exhibit complex, nonlinear responses to external drivers, and functional form misspecification leads to biased estimation and invalid inference. Moreover, the problem of variable selection among exogenous covariates remains unresolved in volatility models with nonparametric components. Although model selection for time series has been addressed in several contributions \citep{BardEtAl2020, DiopKengne2022, Kengne2021, BardetEtAl2023}, existing procedures target lag-order selection in Autoregressive Moving Average (ARMA) and GARCH-type models and are not directly designed for variable selection among covariates. Penalized likelihood approaches analogous to BIC can in principle be applied, but exhaustive search over covariate subsets becomes computationally challenging even for a moderate number of candidates.

This paper addresses both limitations simultaneously. First, we introduce the ARCH-$m(X)$ model, a semiparametric ARCH-$X$ framework in which the effect of a multivariate exogenous covariate vector $\mathbf{X} \in \mathbb{R}^d$ on the conditional variance is modeled through an unknown nonparametric function $m(\cdot)$. 
A central contribution of this paper is a novel hypothesis test for the relevance of individual exogenous covariates within the ARCH-$m(X)$ framework. The test is constructed by utilizing an artificial one-way ANOVA in which the number of factor levels grows with sample size, and we show that the resulting test statistic converges in distribution to a standard Normal under regularity conditions. 

Building on the novel test statistic, we develop a variable selection procedure for the ARCH-$m$(X) model by applying the Benjamini--Yekuteli \citep{BenjaminiYekutieli2001} false discovery rate (FDR) correction to the covariate-level $p$-values. This exploits the classical connection between hypothesis testing and variable selection: excluding covariate $j$ is equivalent to not rejecting $H_0^j : m_j = 0$, where $m_j$ denotes the marginal contribution of covariate $X_j$ \citep{AbramovichEtAl2006, KongXia2007, LiLiang2008, WangXia2009, HuangEtAl2010, StorlieEtAl2011, BuneaEtAl2006}. We formally establish the consistency of the selection procedure, showing that the selected index set coincides with the true relevant covariate set with probability tending to one as $n \to \infty$. Extensive simulations demonstrate favorable performance relative to competing methods, and an empirical application to S\&P~500 return volatility with commodity prices as candidate covariates illustrates the practical utility of the proposed variable selection framework.

The remainder of the paper is organized as follows. Section~\ref{sec:method} introduces the ARCH-$m(X)$ model and develops the estimation and the hypothesis test procedure. The limiting distribution of the proposed test statistic is derived in Section \ref{sec.Test_asymptotics}. Section~\ref{sec.variable_selection} introduces the variable selection procedure and establishes consistency based on FDR corrections. Sections~\ref{sec.simulations} and~\ref{sec.real_data} present the simulation studies and an empirical application, respectively.

\section{Nonparametric Hypothesis Test}\label{sec:method}

\subsection{Methodology}
Consider an Autoregressive Conditionally Heteroscedastic model of order $p$ and  $d$ exogenous covariates $X_{1}, \ldots, X_d$, which is often referred to as an ARCH($p$)-$m(\mathbf{X})$ model. Given observations $\epsilon_t\ (1\leq t\leq n)$, the model is specified in the following form:
\begin{equation}
    \begin{cases}
  \epsilon_t = z_t\sigma_t,\\
  \sigma_t^2= \sum_{j=1}^p \alpha_j\epsilon_{t-j}^2 +m(\mathbf{X}_{t-1}),
    \end{cases}
    \label{eq:ARCHmodel}
\end{equation}
where $m$ denotes a nonparametric function that includes the exogenous covariates $\mathbf{X}_t = (X_{1t},\ldots,X_{dt})^T$, $z_t$'s are i.i.d. random variables with zero mean and unit variance and are independent of $\mathbf{X}_t$. In addition, $\alpha_j\geq 0$ for $j=1,\ldots, p$, and $m(\mathbf{X}_t)>0$. Furthermore, assume that $E(\epsilon_t|\mathcal{I}_{t-1})=\sigma_t^2>0$, where $\mathcal{I}_{t-1}$ is the $\sigma$-algebra generated by $\{\epsilon_u,\mathbf{X}_u: u < t\}$, and $E(z_t^4)<\infty$. Note that this model does not contain a constant term in the volatility as it is absorbed into the function $m(\cdot)$.

By setting $v_t=\epsilon_t^2-E(\epsilon_t^2|\mathcal{I}_{t-1})=\epsilon_t^2-\sigma_t^2$, equation \eqref{eq:ARCHmodel} can be rewritten as an autoregressive model of order $p$ in the following way
\begin{equation}    
\epsilon_t^2 =  \sum_{j=1}^p\alpha_j\epsilon_{t-j}^2 + m(\mathbf{X}_{t-1}) + v_t.
    \label{eq:model-new}
\end{equation}
The errors $v_t$'s in equation \eqref{eq:model-new} are no longer i.i.d.. They are white noises with variance given by
{\small
\begin{eqnarray*}
    \tau&=&\text{Var}(z_t^2)E\left[ \sum_{j=1}^p\alpha_j\epsilon_{t-j}^2 + m(\mathbf{X}_{t-1})\right]^2\\
    &=&\text{Var}(z_t^2)\Big[E\left(\sum_{j=1}^p\alpha_j \epsilon_{t-j}^2 \right)^2+E\{m^2(\mathbf{X}_{t-1})\}\\
    && + 2E\{m^2(\mathbf{X}_{t-1})\}\left(\sum_{j=1}^p \alpha_j E(\epsilon_{t-j}^2) \right)\Big]
\end{eqnarray*}}
Additionally, since $E(v_t|\mathcal{T}_{t-1})=0$ with $\mathcal{T}_{t-1}$ being a $\sigma$-algebra generated by $(v_{t-1},v_{t-2},\ldots)$, $v_t$ is a martingale difference. Moreover, given the assumption on $z_t$ specified in model \eqref{eq:ARCHmodel}, $E(v_t^2)<\infty$.

Assume that each $X_\ell$, among the $d$ exogenous covariates $\{X_1,\ldots,X_d\}$, is distributed over a compact interval, say $[a_\ell, b_\ell]$. Without loss of generality, assume these compact intervals are all $[0,1]$. Let $\mathcal{G}_\ell$ be the space of polynomial splines of order $m_\ell$. Denote $B_\ell(x_\ell)=\{B_{\ell,j_\ell}(x_\ell): 1-m_\ell\leq j_\ell<N_\ell\}^T$ as a basis system of space $\mathcal{G}_\ell$ for the $\ell$th coordinate $X_\ell$  ($1\leq \ell\leq d)$, where $N_\ell$ is the number of internal knots and $m_\ell$ is the spline order. Thus, the number of basis functions in $B_\ell$ is $K_\ell=N_\ell+m_\ell$. Define the space of tensor-product polynomial splines by $\mathcal{G}=\otimes_{\ell=1}^d\mathcal{G}_\ell$ which is a linear space of dimension $K=\prod_{\ell=1}^d K_\ell$. Then, we assume that the function $m(\cdot)$ in equation \eqref{eq:ARCHmodel} belongs to the space $\mathcal{H}_{K}$ with interior knot sequence $\vt := \vt_\ell = t_{0,\ell} = 0 < t_{1,\ell} < \ldots < t_{N_\ell + m_\ell, \ell} $ that is spanned by a B-spline basis system defined as a tensor product of $B_\ell$ for $1\leq \ell\leq d$. That is,
\bqan \label{m_param}
m(\vx) = \vtheta^T \mathcal{B}(\vx),
\eqan
where 
\bqa
\mathcal{B}(\vx) = \left[\left\{\mathcal{B}_{\ell_1, \ldots, \ell_d}(\vx)\right\}_{\ell_1 = 1-m_1, \ldots, \ell_d = 1-m_d}^{N_1, \ldots, N_d}\right]_{K\times 1} = B_1(x_1) \otimes \cdots \otimes B_d(x_d)
\eqa
is a basis system of the space $\mathcal{G}$, and $\vtheta = (\theta_1, \ldots, \theta_{K})$ is a vector of parameters. With sufficiently large $K$, $\mathcal{H}_{K}$ approximates the infinite-dimensional Sobolov space $\mathcal{H}_2^2 = \left\{f:[0,1] \to \mathbb{R}, \int f^2 + \int (f')^2 + \int(f'')^2 < \infty\right\}$, as shown in \cite{reif1997orthogonality}.
With this formulation, the volatility in (\ref{eq:ARCHmodel}) becomes
\bqa
  \sigma_t^2= \sum_{j=1}^p \alpha_j\epsilon_{t-j}^2 + \vtheta^T \mathcal{B}(\vx).
\eqa
To estimate the parameter vector $\vnu = (\alpha_1, \ldots, \alpha_p,  \theta_1, \ldots, \theta_K)$ assume we have $n$ observations of the time series $\epsilon_t$ and covariates $\vx_t$ and, given initial values $\epsilon_{1-p}, \ldots, \epsilon_0$ and $\sigma_{1-q}, \ldots, \sigma_0$, consider the quasi-likelihood 
\bqa
L_n(\vnu) = L_n(\vnu, \epsilon_1, \ldots, \epsilon_n, \vx_1, \ldots, \vx_T) = \prod_{t=1}^n\frac{1}{\sqrt{2\pi\sigma^2_t}}\exp\left(\frac{-\epsilon_t^2}{2\sigma_t^2}\right).
\eqa
The QMLE is then given by
\bqa
\hat{\vnu} = \arg\max_{\vnu}L_n(\vnu) = \arg\min_{\vnu}\frac{1}{n}\sum_{t = 1}^n\frac{\epsilon_t^2}{\sigma_t^2} + \ln\sigma_t^2,
\eqa
which is $\sqrt{n}$ consistent (see \cite{FrancqThieu2019}).



The main objective is to test the significance of an exogenous covariate, say $X_\ell$, under this nonparametric ARCH framework. Let
$\mathbf{X}^{-\ell}=(X_1,\ldots,X_{\ell-1}, X_{\ell+1},\ldots, X_d)^T$ be a vector of $d-1$ covariates excluding $X_\ell$. To test for the significance of $X_\ell$, the null hypothesis of interest can be expressed as
\begin{equation}
H_0: m(\mathbf{X}_t) = m_\ell(\mathbf{X}^{-\ell}_t),\label{eq:H0}
\end{equation}
where $m_\ell$ denotes a nonparametric functional.
Under the null hypothesis (\ref{eq:H0}) the residuals are defined as
\bqan
    \hat{v}_t=\epsilon_{t}^2 -\sum_{j=1}^p\hat{\alpha}_j\epsilon_{t-j}^2 -\hat{m}_\ell(\mathbf{X}^{-\ell}_{t-1})
    \label{eq.v_hat}
\eqan
for $t=1,\ldots,n$. Under $H_0$, $\hat{v}_t$'s are approximately white noises, and therefore they are uncorrelated with the covariate $\mathbf{X}^{-\ell}_t$. Hence, we can construct a test statistic for \eqref{eq:H0} by gauging the strength of the association between $\hat{v}_t$ and $X_{\ell t}$.

Motivated by the work of \cite{WangEtAl2010}, we propose to build a one-way ANOVA-type of design to test for $H_0$, where the factor levels are composed in a specific way as detailed below. Without loss of generality, assume $(X_{\ell t},\hat{v}_t)$ are arranged in such a way that $X_{\ell t_1} < X_{\ell t_2}$ for any $t_1<t_2$. Let $(X_{\ell t},\hat{v}_t)$ be data from a balanced one-way ANOVA design with $\hat{v}_t$ represent the data at ``level" $X_{\ell t}$. Since ANOVA statistics require multiple observations at each factor level, we augment each level by including $\hat{v}_s$ values from the $k_n-1$ nearest covariate values of $X_{\ell t}$. Here $k_n$, a fixed constant, stands for the number of observations in each factor level. We refer to it as the window size. Specifically, we define the windows as
\[
W_t=\left\lbrace s: |\hat{F}_X(X_{\ell s})-\hat{F}_X(X_{\ell t})|\leq \frac{k_n-1}{2n}\right\rbrace,
\]
where $\hat{F}_X$ is the empirical cumulative distribution of $X_{\ell}$. At edges, i.e., for the lowest and highest $k_n$ values of the covariate, asymmetric windows can be used, which have negligible effects on the asymptotic distribution of the test statistic described next.

We propose to use the following test statistic to test for $H_0$ \eqref{eq:H0}:
\begin{equation} \label{eq:Tn}
  T_n=MST_n-MSE_n=\frac{k_n}{n-1}\sum_{t=1}^n (\hat{v}_{t\cdot}-\hat{v}_{\cdot\cdot})^2-\frac{1}{n(k_n-1)}\sum_{t=1}^n\sum_{s\in W_{t}}(\hat{v}_s-\hat{v}_{t\cdot})^2,  
\end{equation}
where $\hat{v}_{t\cdot}=(1/k_n)\sum_{s\in W_t}\hat{v}_s$ and $\hat{v}_{\cdot\cdot}=(1/nk_n)\sum_{t=1}^n\sum_{s\in W_t}\hat{v}_s$.

\subsection{Asymptotic Results} \label{sec.Test_asymptotics}

Consider the following conditions.

C1: the function $m(\cdot)$ belongs to the space $\mathcal{H}_{K}$ as defined in equation (\ref{m_param}). 

C2: Under $H_0$, $\hat{\vnu} - \vnu = O_p(n^{-1/2})$.

C3: Density $f_X(\cdot)$ of covariate $X$ is twice continuously differentiable in a compact support $\mathcal{X}$ with $f(x) > 0$ for $x \in \mathcal{X}$.

C4: $\tau < \infty$.

Conditions C1 and C2 guarantee the convergence rate of the estimators of the parameters in the ARCH model which, in turn, allows specific terms in the test statistic to shrink or stabilize asymptotically, allowing the null distribution to be Gaussian asymptotically, as established in Theorem \ref{thm.null}.
Condition C3 typically holds for probability density functions with compact support and constitutes the only restriction on the covariate $\mathbf{X}$. Next, we state the asymptotic distribution result of the test statistic $T_n$ under $H_0$ and local alternatives.

\begin{theorem} \label{thm.null}
    Assume considers C1-C4 hold . Then, under $H_0$ in \eqref{eq:H0}, as $n\to\infty$, $k_n\to \infty$, and $k_n$ is such that $k_n^{5/2}/n^{1/2}\to 0$, 
    \[
    \left(\frac{n}{k_n}\right)^{1/2}T_n\xrightarrow{D} N(0, 4\tau^2/3).
    \]
\end{theorem}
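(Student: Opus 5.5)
The plan is to reduce $T_n$ to the same statistic built from the true errors $v_t$, and then apply a martingale central limit theorem to a quadratic form, following the artificial-ANOVA argument of \cite{WangEtAl2010}. Write $T_n = T_n(\hat v)$ and $\tilde T_n = T_n(v)$ for the statistic computed from the true $v_t$. Under $H_0$ and C1, the definition \eqref{eq.v_hat} gives
\[
\hat v_t = v_t - \sum_{j=1}^p(\hat\alpha_j-\alpha_j)\epsilon_{t-j}^2 - (\hat\vtheta-\vtheta)^T\mathcal{B}(\mathbf{X}^{-\ell}_{t-1}) = v_t + r_t ,
\]
where C2 gives $r_t = O_p(n^{-1/2})$ uniformly in the sense needed, since $E\epsilon_t^4<\infty$ and the basis functions are bounded.

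\textbf{Step 1: quadratic-form representation.} Expanding $MST_n - MSE_n$ as in \cite{WangEtAl2010}, write
\[
T_n = \sum_{s,s'} a_{ss'}\,\hat v_s\hat v_{s'},
\]
with weights $a_{ss'}$ determined by the windows $W_t$. A direct count of the window overlaps shows two things. The diagonal terms cancel up to $O(1/n)$ times a term that is $O_p(1)$ after centering; this is the point of the MST minus MSE construction. The off-diagonal weights are of order $1/(n(k_n-1))$ times the number of windows containing both $s$ and $s'$. That number is $k_n-|\text{rank}(s)-\text{rank}(s')|$ for pairs within distance $k_n$ in the $X_\ell$-ordering, and zero otherwise.

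\textbf{Step 2: remainder.} Show that $(n/k_n)^{1/2}\{T_n-\tilde T_n\}=o_p(1)$. The cross terms $\sum a_{ss'}v_s r_{s'}$ involve $O(nk_n)$ nonzero weights of size $O(1/n)$. Using the martingale-difference property of $v_t$ together with C2, they are $O_p(n^{-1/2}k_n^{1/2})$ or smaller. The terms quadratic in $r$ are $O_p(k_n/n)$. The extra factor coming from grouping by $X_\ell$-rank rather than by time, plus the replacement of the windows by population ranks, is controlled by C3 via $\sup|\hat F_X-F_X|=O_p(n^{-1/2})$. Collecting these bounds, the remainder is negligible precisely when $k_n^{5/2}/n^{1/2}\to0$, which is where that rate condition enters.

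\textbf{Step 3: CLT for $\tilde T_n$.} Show that $E\tilde T_n\approx0$ by the diagonal cancellation and the martingale-difference property. Then order the indices so that, conditionally on the covariates, the centered quadratic form $\sum_{s\neq s'}a_{ss'}v_sv_{s'}$ is a sum of martingale differences $\xi_s = 2v_s\sum_{s'<s}a_{ss'}v_{s'}$. Compute its variance as
\[
4\tau^2\sum_{s\ne s'}a_{ss'}^2 \approx 4\tau^2\cdot\frac{1}{n^2k_n^2}\cdot n\sum_{h=1}^{k_n}2(k_n-h)^2\cdot\tfrac12 .
\]
Since $\sum_{h}(k_n-h)^2\approx k_n^3/3$, multiplying by $n/k_n$ yields the limit $4\tau^2/3$. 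Finally, verify the conditional-variance convergence and the Lindeberg condition of the martingale CLT (e.g. Hall and Heyde), using $E z_t^4<\infty$ and C4.

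\textbf{Main obstacle.} The hardest step is the martingale CLT in Step 3. The window structure is defined through the $X_\ell$-ordering, which is not the time ordering, so the natural martingale filtration is broken. I would first try conditioning on the whole covariate path, using the independence of $z_t$ from $\mathbf{X}$ so that $\{v_t\}$ remains a martingale difference sequence in time given the $\mathbf{X}$'s. I would then rewrite the quadratic form in time order with weights depending on $\mathbf{X}$. After that, I would check that $\sum_s E(\xi_s^2\mid\mathcal{F}_{s-1})$ converges in probability using ergodicity of $\{\sigma_t^2\}$. If the conditional variance of $v_t$ does not reduce to the constant $\tau$, the limit may need to be interpreted with $\tau$ as the averaged value $E v_t^2$.
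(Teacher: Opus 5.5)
Your outline matches the paper's everywhere except the central limit step. Both write $\hat v_t=v_t+r_t$, expand $T_n=\vV'\vA\vV$, bound the terms involving $r_t$ using C2 and bounded B-splines, and get $4\tau^2/3$ from the overlap count $\sum_h(k_n-h)^2\approx k_n^3/3$.

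\textbf{How the CLT is obtained.} The paper writes $\vv_V'\vA\vv_V=n^{-1}\sum_t A_t$, with $A_t$ the window sums indexed by $X_\ell$-rank. It cuts this sequence into big blocks of length $b_n\sim n^{2/3}k_n^{1/3}$, separated by gaps of length $l_n\sim k_n$, so different big blocks use disjoint observations. It then invokes Peligrad's $\rho$-mixing CLT. This is the \cite{WangEtAl2010} device for independent errors. With ARCH errors it gives only uncorrelated blocks. The $\rho$-mixing of a rank-indexed sequence, whose neighbours sit at arbitrary times, is asserted rather than checked.

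You instead condition on the covariate path and apply a martingale CLT in time order. This uses the dependence the model actually supplies. It also handles fourth-order terms such as $E(v_av_bv_c^2)$, $a<b<c$, through the conditional-variance condition; these are nonzero for ARCH errors and are dropped in the paper's variance calculation. Your route has two costs:
\begin{itemize}
\item It needs the entire covariate process to be independent of $\{z_t\}$, which is stronger than the stated independence of $z_t$ and $\mathbf{X}_t$. Otherwise $\{v_t\}$ is not a martingale difference sequence given all of $\mathbf{X}$.
\item The weights $a_{ss'}$ are banded in rank but not in time, so $\sum_sE(\xi_s^2\mid\mathcal{F}_{s-1})$ is not an ergodic time average. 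You need a local-averaging argument over $X_\ell$-neighbours, plus mixing and bounded joint densities so that window-mates are typically far apart in time.
\end{itemize}

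\textbf{Two corrections.} First, your closing fallback does not rescue the stated limit. Let $\bar\tau(x)=E(v_t^2\mid X_{\ell t}=x)$. The local averaging yields $\frac43E\{\bar\tau(X_\ell)^2\}$. By Jensen this exceeds $\frac43\tau^2=\frac43\{E(v_t^2)\}^2$ unless $\bar\tau$ is constant. Constancy can fail under $H_0$ whenever $X_\ell$ depends on $\mathbf{X}^{-\ell}$, as in the $\rho=0.5$ designs. So constant $\bar\tau$ is an implicit hypothesis of the theorem. The paper's proof also uses it, when it sets matched fourth moments equal to $\tau^2$.

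Second, your Step 2 bounds only need $k_n/n\to0$. After conditioning, the empirical windows are fixed, so no replacement by population ranks or $\sup|\hat F_X-F_X|$ is needed. In the paper, a rate condition of this kind enters elsewhere: through the within-window approximation of $\mathcal{B}(\mathbf{X}^{-\ell}_{s-1})$ via Lemma 1.0.2 of \cite{ZambomAkritas2014}, in the last term of $\vv_V'\vA\vm_V$. You should not claim the remainder is negligible ``precisely'' when $k_n^{5/2}/n^{1/2}\to0$.
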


The proof of Theorem \ref{thm.null} can be found in the appendix.
 We estimate the variance $\tau$ of $v_t$ by applying \cite{Rice1984}'s
estimator to the filtered data
 \bqa
 \hat{\tau} = \frac{1}{2(n-1)}\sum_{t=2}^n (\hat{v}_t- \hat{v}_{t-1})^2.
 \eqa
Because $\hat\alpha_j$ and $\hat\vtheta$ are $\sqrt{n}$-consistent estimators, we have that $\hat{\tau}$ is a consistent estimator of $\tau$.

The asymptotic power of the proposed test statistic can be studied by considering the
probability the test rejects the null hypothesis when the alternative hypothesis approaches the null at a
certain rate, also known as Pitman alternatives.
In order to obtain an insight on the power of the proposed  statistic, we consider a sequence of local additive alternatives and general local alternatives as follows
\bqan\label{Ha}
&&H^A_a: m(\vX_t) = m_\ell(\vX_t^{-\ell}) + (nk_n)^{-1/4}\tilde{m}_\ell(X_{\ell t}) \nonumber\\
&&H^G_a: m(\vX_t) = m_\ell(\vX_t^{-\ell }) + (nk_n)^{-1/4}(\tilde{m}_\ell (X_{\ell t}) + \tilde{m}(\vX_{t})),
\eqan
where the functions $\tilde{m}_\ell$ and $\tilde{m}$ are assumed to be such that $E(\tilde{m}(\vX_t))  = E(\tilde{m}_\ell(X_{\ell t})) = 0$.

Theorem~\ref{thm.altern} establishes that under local additive and general alternatives as specified in (\ref{Ha}), the proposed test statistic asymptotically follows a Normal distribution and can detect effects at a rate of $(nk_n)^{-1/4}$. The deviation of the alternative distribution from the null is reflected in the asymptotic mean, indicating that the test's power arises from the variation in the local function $\tilde{m}_\ell(\cdot)$ (and $\tilde{m}(\cdot)$ in the general case) deviating from the null model.

\begin{theorem} \label{thm.altern} Assume conditions C1-C4 hold and let $\tilde{m}_i(\cdot)$ be Lipschitz continuous. Then, as $n\to\infty$, $k_n\to\infty$ and $k_n$ is such that $k_n^{5/2}/n^{1/2} \to 0$
\begin{enumerate}
    \item under the additive alternative $H_a^A$we have
\bqa
\left(\frac{n}{k_n}\right)^{1/2}T_n \dkonv N\left(\text{Var}(\tilde{m}_\ell(X_{\ell t})) , \frac{4\tau^2}{3}\right). 
\eqa
     \item under the general alternative $H_a^G$we have
\bqa
\left(\frac{n}{k_n}\right)^{1/2}T_n \dkonv N\left(\text{Var}(\tilde{m}_\ell(X_{\ell t})) + Var(\tilde{m}(\vX)) , \frac{4\tau^2}{3}\right). 
\eqa
\end{enumerate}
\end{theorem}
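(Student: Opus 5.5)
The plan is to reduce the alternative case to the null case of Theorem~\ref{thm.null} by decomposing the residuals. Under $H_a^A$ (and $H_a^G$) the null-fitted residuals are
\[
\hat v_t = v_t + \delta_n g(\vX_{t-1}) + r_t, \qquad \delta_n=(nk_n)^{-1/4},
\]
where $g=\tilde m_\ell$ in the additive case and $g=\tilde m_\ell+\tilde m$ in the general case. The term $r_t$ collects the estimation error $-\sum_j(\hat\alpha_j-\alpha_j)\epsilon_{t-j}^2-(\hat m_\ell-m_\ell)$. Here $\hat m_\ell$ is understood as the fit under the working null model, so it converges to the projection onto functions of $\vX^{-\ell}$. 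The misspecification is only of order $\delta_n$, so C2 should still give $r_t=O_p(n^{-1/2})$ uniformly, perhaps with an $O(\delta_n)$ projection shift that I will absorb into $g$, replacing $g$ by $g-E(g\mid\vX^{-\ell})$.

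Since $T_n$ is a quadratic form in $\hat v$, namely $T_n=\hat v^T A_n \hat v$ for the fixed ANOVA matrix $A_n$ (depending only on the ordering of $X_\ell$ and on $k_n$), I will expand it as
\[
T_n = v^TA_nv + 2\delta_n g^TA_nv + \delta_n^2 g^TA_ng + (\text{terms involving } r).
\]
The first term is exactly the null statistic, so by Theorem~\ref{thm.null} it satisfies $(n/k_n)^{1/2}v^TA_nv \dkonv N(0,4\tau^2/3)$. The $r$-terms are negligible by the same bounds used in the null proof; that is where $k_n^{5/2}/n^{1/2}\to0$ enters.

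For the cross term, $v_t$ is a martingale difference and $g$ depends only on covariates, so $E(g^TA_nv)=0$. Its variance is $O(n^{-1}\cdot \mathrm{Var}(g)\cdot \text{const})$, because the row sums of $A_n$ are bounded. Hence $(n/k_n)^{1/2}\delta_n g^TA_nv=O_p(\delta_n k_n^{-1/2})\to0$.

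The main step, and the expected obstacle, is the deterministic-signal term. Here $MST_n$ applied to $g$ gives
\[
\frac{k_n}{n-1}\sum_t(\bar g_{t\cdot}-\bar g_{\cdot\cdot})^2,
\]
where $\bar g_{t\cdot}$ averages $g$ over the window $W_t$. In the additive case $g$ is a function of $X_{\ell}$ alone and is Lipschitz, and by C3 the window spans an $X_\ell$-range of $O_p(k_n/n)$. Thus $\bar g_{t\cdot}=g(X_{\ell t})+O_p(k_n/n)$, and a law of large numbers for the stationary covariate sequence gives $MST\approx k_n\mathrm{Var}(\tilde m_\ell)$. The within-window $MSE$ of $g$ is $O_p((k_n/n)^2)$. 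So $(n/k_n)^{1/2}\delta_n^2 k_n\mathrm{Var}(\tilde m_\ell)=(n/k_n)^{1/2}(nk_n)^{-1/2}k_n\,\mathrm{Var}=\mathrm{Var}(\tilde m_\ell)$, which is the stated mean.

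In the general case the $\tilde m(\vX_t)$ part is not smooth along the $X_\ell$-ordering. Its within-window average therefore behaves like $E(\tilde m\mid X_\ell)$ plus a within-window fluctuation, and the $MST$--$MSE$ difference must be handled by splitting $\tilde m=E(\tilde m\mid X_\ell)+\eta$. Applying the ANOVA identity to $\eta$ shows that its $MST$ and $MSE$ contributions cancel to leading order, because the unbiasedness of $MST-MSE$ mirrors the null computation. This would seem to leave only $\mathrm{Var}(E(\tilde m\mid X_\ell))$, which conflicts with the stated $\mathrm{Var}(\tilde m(\vX))$. I would therefore check this step carefully. Either the paper's interpretation intends $\tilde m$ to vary with $X_\ell$ in a way that makes these quantities coincide, or I would state the mean in the computed form.

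Finally, Slutsky's lemma combines the three pieces, together with $\hat\tau\pkonv\tau$ for studentization.
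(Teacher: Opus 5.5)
Your part (a) follows the paper's route. You expand $T_n$ as a quadratic form in $\hat v=v+\delta_n g+r$, take the pure-noise part from Theorem~\ref{thm.null}, show the cross terms vanish, and get the centering from the Lipschitz approximation $\bar g_{t\cdot}=g(X_{\ell t})+O_p(k_n/n)$. Two corrections are needed there.

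First, $\mathrm{Var}(g^TA_nv)$ is of order $k_n^2/n$, not $n^{-1}$. Up to edge effects the rows of $A_n$ do sum to zero, but for $g$ Lipschitz in $X_\ell$ one has $(A_ng)_t=\frac{k_n}{n-1}(g_t-\bar g)+O_p(k_n^2/n^2)$. The scaled cross term is therefore $O_p((k_n/n)^{1/4})$. This still vanishes, and it matches the paper's factor $n^{1/4}k_n^{1/4}$ times $L=O_p(n^{-1/2})$.

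Second, you cannot both absorb a projection shift into $g$ and then treat $g$ as a function of $X_\ell$ alone. Once you subtract the projection of $\tilde m_\ell$ onto functions of $\vX^{-\ell}$, your own part-(b) reasoning gives a centering involving $E[\,\cdot\mid X_\ell]$ of that difference. That equals $\mathrm{Var}(\tilde m_\ell)$ only when, for example, $X_\ell$ is independent of $\vX^{-\ell}$. The paper sidesteps this by tacitly applying C2 under $H_a^A$: it writes the residual with the null-model estimation error still $O_p(n^{-1/2})$. You must adopt that assumption explicitly rather than absorb a shift.

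For part (b), your hesitation is justified. The step you could not close is exactly where the paper's proof fails, and the statement is false as written. The paper asserts, ``using steps similar to those in part a)'', that the scaled $\tilde{\vm}_V'\vA\tilde{\vm}_V$ tends to $\mathrm{Var}(\tilde m(\vX_t))$. The only mechanism in (a), however, is the Lipschitz window-average approximation along the $X_\ell$-ordering, and that is unavailable when $\tilde m$ depends on other coordinates.

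Your split $\tilde m=E(\tilde m\mid X_\ell)+\eta$ gives $MST\approx k_n\mathrm{Var}\{E(\tilde m\mid X_\ell)\}+E\eta^2$ and $MSE\approx E\eta^2$. So only $\mathrm{Var}\{E(\tilde m\mid X_\ell)\}$ survives the scaling.

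The paper also discards the cross term $2(nk_n)^{-1/2}\tilde{\vm}_{V\ell}'\vA\tilde{\vm}_V$. After multiplying by $(n/k_n)^{1/2}$, its window part is essentially $2n^{-1}\sum_t\tilde m_\ell(X_{\ell t})\bar{\tilde m}_{t\cdot}$, where $\bar{\tilde m}_{t\cdot}$ is the window average of $\tilde m$. This tends to $2\mathrm{Cov}\{\tilde m_\ell(X_\ell),\tilde m(\vX)\}$, not zero. Applying your decomposition to the whole $g=\tilde m_\ell+\tilde m$ gives the correct centering, $\mathrm{Var}\{\tilde m_\ell(X_\ell)+E[\tilde m(\vX)\mid X_\ell]\}$.

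Here is a concrete counterexample to the stated form. Let $X_\ell$ and $X_j$ be independent uniform on $[0,1]$ and independent of the other covariates. Take $\tilde m_\ell\equiv 0$ and $\tilde m(\vX)=(X_\ell-\frac12)(X_j-\frac12)$. Then $E(\tilde m\mid X_\ell)=0$ and $E(\tilde m\mid \vX^{-\ell})=0$, so there is no projection shift. The limit is $N(0,4\tau^2/3)$, not $N(1/144,4\tau^2/3)$.

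So stating the mean in your computed form is the right resolution. Part (b) as written would need extra conditions, such as $\tilde m$ being a function of $X_\ell$ alone and uncorrelated with $\tilde m_\ell$.
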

The proof of Theorem \ref{thm.altern} can be found in the appendix.
\begin{remark} 
If the additional effect $(nk_n)^{-1/4}\tilde{m}_\ell (X_{\ell t})$ approaches zero at a faster rate than $(nk_n)^{-1/4}$, the asymptotic mean of the test statistic under local alternatives converges to zero. Conversely, if it decreases at a slower rate than $(nk_n)^{-1/4}$, the asymptotic mean diverges to infinity. Similarly for the general alternative.
\end{remark}

\section{Test-Based Variable Selection} \label{sec.variable_selection}

Let 
$I_d = \{1,2,\ldots,d\}$ 
denote the index set corresponding to all available exogenous covariates at time $t$ $\vX_t = (\vX_{1t}, \ldots, \vX_{dt})$. Among these, assume that only a subset is truly relevant for explaining the dynamics of the ARCH-$m(X)$ model. We denote the set of indices corresponding to the true covariates by $I_0 \subseteq I_d$, where $|I_0| = d_0 \leq d$.
The objective of variable selection is to recover $I_0$ from the observed data. Specifically, a variable selection procedure produces an estimated index set
$\widehat{I} \subseteq I_d$, which
ideally, coincides with the true set \( I_0 \), either exactly or asymptotically as the sample size increases. This property is defined as the consistency in variable selection, which more formally can be written as
\bqa
\mathbb{P}\!\left(\widehat{I} = I_0 \right) \to 1 \quad \text{as } n \to \infty.
\eqa

Let $\vX^I_t$ denote the subset of the vector of covariates with indices in the set $I$ at time $t$. Suppose that the nonparametric function $m(\cdot)$ in the ARCH-m(X) model in (\ref{eq:ARCHmodel}) is truly a function of the exogenous covariates in $I_0$, i.e., 
\bqa
m(\vX_t)=m\left(\vX^{I_0}_t\right),
\eqa
where, with an abuse of notation, the number of arguments of the function $m$ is determined from the dimension of the vector $I$ it is applied to. 
For each covariate index \( \ell \in I_d = \{1,\ldots,d\} \), define the 
the null hypothesis that it does not contribute to the model as
\begin{equation} \label{H0j}
H_{0}^{\ell}: m\left(\mathbf{X}_t\right) = m_\ell\left(\mathbf{X}_t^{-\ell}\right), \;\;\;\;\; \ell = 1, \ldots, d,
\end{equation}
set $T_n^{j}$ as the test statistic associated with testing $H_0^\ell$ as defined in equation (\ref{eq:Tn}), and let 
\bqa
\pi_\ell = 1 - \Phi\left(\left(\frac{n}{k_n}\right)^{1/2}\left(4\hat{\tau}^2/3\right)^{-1/2}T_n^{\ell}\right), \quad \ell = 1, \ldots, d,
\eqa
be the corresponding $p$-value, where $\Phi(\cdot)$ denotes the standard normal distribution function.

Denote by
\[
\pi_{(1)} \leq \pi_{(2)} \leq \cdots \leq \pi_{(d)}
\]
the ordered \(p\)-values, and let \(H_0^{(\ell)}\) be the null hypothesis corresponding to \(\pi_{(\ell)}\), \(\ell=1,\ldots,d\).  
To control the false discovery rate (FDR), we apply the procedure of Benjamini and Hochberg (1995) and generalized by Benjamini and Yekutieli (2001). Specifically, for a target level \(q \in (0,1)\), define
\begin{equation} \label{k.of.fdr}
k = \max_{\ell\in I_d} \left\{ \ell : \pi_{(\ell)} \leq \frac{\ell}{d} \cdot \frac{q}{\sum_{l=1}^{d} l^{-1}} \right\}.
\end{equation}
If such a maximum exists, all hypotheses \(H_0^{(1)}, \ldots, H_0^{(k)}\) are rejected. If no such \(k\) exists, none of the hypotheses are rejected.  
As a result, the proposed variable selection procedure yields the estimator $\hat{I}$ which is composed by the selected index set
\[
\widehat{I} = \{ \ell \in I_d : H_0^{(\ell)} \text{ is rejected} \},
\]
that is, the indices corresponding to the \(k\) smallest \(p\)-values.

Let $R$ denote the total number of rejected null hypotheses, that is, $R = k$ if such a value exists according to equation (\ref{k.of.fdr}), otherwise, $R = 0$. Let $V$ represent the number of erroneously rejected hypotheses so that $Q = 
V/R$
is the proportion of false rejections (if $R = 0$ then $Q = 0$). The false discovery rate is
\[
\mathrm{FDR} := \mathbb{E}(Q) \leq \alpha_n \frac{d - d_0}{d} \leq \alpha_n.
\]

The consistency result presented here allows the statistical significance of the predictors to diminish as the sample size $n$ increases. Lemma~\ref{lemma:A1} establishes key asymptotic properties of the p-values, which are subsequently utilized in Lemma~\ref{lemma:epsilon} and Theorem~\ref{thm.consistency}. In particular, Lemma~\ref{lemma:epsilon} shows that, with probability approaching one, the smallest $d_0$ p-values correspond precisely to the truly relevant predictors. This fact is central to the proof of Theorem~\ref{thm.consistency}, which establishes the consistency of the proposed method.

\begin{lemma} \label{lemma:A1} Assume that Conditions C1-C4 are satisfied. Then
\ben \item[a)] For $\ell \notin I_0$ and any $\xi > 0$, we have $P(\pi_\ell \leq \xi) = \xi + o(1)$.
\item[b)] For $\ell \in I_0$, let $\xi_n> 0, \xi_n \to 0$, $n\ge 1$.  Denote $M_\ell=\max_t\{\tilde{m}_{\ell}(X_{\ell t})\}$. For general alternatives of the form $m(\vX_t) = m_\ell(\vX_t^{-\ell}) + \tilde{m}_\ell(X_{\ell t})$, if
\bqa
\left(\frac{n}{k_n}\right)^{1/2}M_\ell \to \infty, \ \mbox{ and }\   \xi_n  > O\left(\frac{\exp\left\lbrace-\left(\frac{n}{k_n}M_\ell^2\right)/4\right\rbrace}{\sqrt{\frac{n}{k_n}}M_\ell}\right)
\eqa
for some constant $c$, then $P(\pi_\ell > \xi_n) = o(1)$.
\een
\end{lemma}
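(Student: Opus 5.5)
For part a), I will use Theorem~\ref{thm.null} together with the consistency of $\hat\tau$. For $\ell\notin I_0$ the null hypothesis $H_0^\ell$ in (\ref{H0j}) holds, so Theorem~\ref{thm.null} gives $(n/k_n)^{1/2}T_n^\ell \dkonv N(0,4\tau^2/3)$. Since $\hat\tau\pkonv\tau$ (from the $\sqrt n$-consistency of $\hat\alpha_j,\hat\vtheta$, as noted after Theorem~\ref{thm.null}), Slutsky's lemma yields
\[
Z_\ell:=(n/k_n)^{1/2}(4\hat\tau^2/3)^{-1/2}T_n^\ell\dkonv N(0,1).
\]
Because $\pi_\ell\le\xi$ iff $Z_\ell\ge \Phi^{-1}(1-\xi)$, and the standard normal limit is continuous, we get $P(\pi_\ell\le \xi)\to 1-\Phi(\Phi^{-1}(1-\xi))=\xi$. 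This part should be routine. Uniformity in $\xi$ is not needed since $\xi$ is fixed, although Pólya's theorem would give it anyway.

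For part b), I will decompose the statistic under the alternative $m=m_\ell+\tilde m_\ell$ as in the proof of Theorem~\ref{thm.altern}. There $(n/k_n)^{1/2}T_n$ splits into a null-type part $A_n$ built from the $v_t$, which is $O_p(1)$ with limit $N(0,4\tau^2/3)$, plus a signal part $S_n$. I expect $S_n$ to behave like $(n/k_n)^{1/2}\{\mathrm{Var}(\tilde m_\ell(X_\ell))+o_p(M_\ell^2)\}$, plus a cross term of smaller order $O_p(M_\ell)$.

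The key claim is that $S_n\gtrsim (n/k_n)^{1/2}M_\ell$ in probability. The Lipschitz continuity of $\tilde m_\ell$ keeps the window averages $\hat v_{t\cdot}$ close to $\tilde m_\ell(X_{\ell t})$ up to $O(k_n/n)$. Under $(n/k_n)^{1/2}M_\ell\to\infty$ this shows $Z_\ell\to\infty$ in probability, with the deterministic drift $\mu_n\asymp (n/k_n)^{1/2}M_\ell$ (after dividing by $\hat\tau$).

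To finish, I write
\[
P(\pi_\ell>\xi_n)=P\bigl(Z_\ell<\Phi^{-1}(1-\xi_n)\bigr).
\]
By the Mills ratio, $\Phi^{-1}(1-\xi_n)\le c\,\mu_n/\sqrt2$ whenever $\xi_n$ exceeds the order of $\exp(-\mu_n^2/4)/\mu_n$, which is exactly the stated rate. Hence
\[
P(\pi_\ell>\xi_n)\le P\bigl(A_n+S_n-\mu_n< -\mu_n(1-c/\sqrt2)\bigr)\to0,
\]
since $A_n=O_p(1)$ and $(S_n-\mu_n)/\mu_n=o_p(1)$.

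The main obstacle is controlling the signal term when the alternative is not local. Here the estimator $\hat m_\ell$ is fitted under the misspecified null, so $\hat v_t$ contains $\tilde m_\ell(X_{\ell t})$ minus its projection onto the $\mathbf{X}^{-\ell}$ spline space and the lag terms. That projection could partially absorb the signal. I would handle this by assuming, as implicit in Theorem~\ref{thm.altern}, that the residual part of $\tilde m_\ell$ after this projection retains variance comparable to $M_\ell^2$. I would then bound the estimation error of the pseudo-true parameter at rate $n^{-1/2}$, which is negligible against $(k_n/n)^{1/2}$-scaled terms under $k_n^{5/2}/n^{1/2}\to0$.
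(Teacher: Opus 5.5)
\textbf{Where the argument matches the paper.} Your part a) is the paper's argument: Theorem~\ref{thm.null}, consistency of $\hat\tau$, and Slutsky. Your plan for part b) also follows the paper's route: a null component plus a diverging drift, then a Mills-ratio comparison with $\xi_n$.

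\textbf{The gap.} The step you call the key claim, $S_n\gtrsim (n/k_n)^{1/2}M_\ell$, is not just unproved. It is false in part of the regime the lemma is meant to cover. Because $T_n$ is a quadratic form, the signal enters through $\tilde{\vm}'\vA\tilde{\vm}$, so the drift is quadratic in the amplitude. Let $g$ be the part of $\tilde m_\ell$ not absorbed by the null fit, and take $\tilde{\vm}$ to be the vector of $g$-values. Since MST carries the factor $k_n$, we get $(n/k_n)^{1/2}\tilde{\vm}'\vA\tilde{\vm}\approx (nk_n)^{1/2}\mathrm{Var}(g)$. This is Theorem~\ref{thm.altern} after rescaling; your heuristic display drops the $k_n$.

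Even under your extra assumption $\mathrm{Var}(g)\asymp M_\ell^2$, the drift is of order $(nk_n)^{1/2}M_\ell^2$. This dominates $(n/k_n)^{1/2}M_\ell$ only when $k_nM_\ell$ stays bounded away from zero.

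\textbf{Counterexample to b) as stated.} Take $\tilde m_\ell=M_\ell h$, with $h$ a fixed non-constant Lipschitz function with $\max h=1$, and let $(k_n/n)^{1/2}\ll M_\ell\ll (nk_n)^{-1/4}$. This range is nonempty because $k_n^{5/2}/n^{1/2}\to0$ forces $k_n^3=o(n)$. Then:
\begin{itemize}
\item $(n/k_n)^{1/2}M_\ell\to\infty$, so the hypothesis of b) holds;
\item the drift tends to zero, exactly as the remark after Theorem~\ref{thm.altern} says;
\item the standardized statistic is still asymptotically $N(0,1)$, so $\pi_\ell$ is asymptotically uniform;
\item hence $P(\pi_\ell>\xi_n)\to1$ for every $\xi_n\to0$, including any $\xi_n$ obeying the stated lower bound, which itself tends to zero.
\end{itemize}
So b) cannot be derived from its stated hypothesis, by your argument or any other. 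Separately, writing ``$\mu_n\asymp (n/k_n)^{1/2}M_\ell$'' would not suffice even where the orders agree. The constant in $\mu_n$ sits in the exponent of the threshold for $\xi_n$.

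\textbf{The paper shares this defect.} Its proof writes the signal contribution as $O_p((n/k_n)^{1/2}M_\ell)$. That is neither its actual order nor a lower bound, yet the proof then uses it as the drift inside $1-\Phi(\cdot)$.

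\textbf{What your argument does prove.} Replace $(n/k_n)^{1/2}M_\ell$ by the true standardized drift $\mu_n=(4\tau^2/3)^{-1/2}(nk_n)^{1/2}\mathrm{Var}(g)$, with $\tau$ the probability limit of $\hat\tau$. Suppose $\mu_n\to\infty$ and $\xi_n\ge C\exp(-\mu_n^2/4)/\mu_n$ with $C>\pi^{-1/2}$. Then eventually $\Phi^{-1}(1-\xi_n)<\mu_n/\sqrt2$, and your final display goes through. This uses two facts:
\begin{itemize}
\item $A_n=O_p(1)$;
\item the cross term $2(n/k_n)^{1/2}\vv_V'\vA\tilde{\vm}=O_p(k_n^{1/2}\mathrm{sd}(g))$ is $o_p(\mu_n)$ whenever $\mu_n\to\infty$.
\end{itemize}
Alternatively, you can keep the paper's threshold by assuming both $\mathrm{Var}(g)\ge cM_\ell^2$ \emph{and} $k_nM_\ell\to\infty$. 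Then $\mu_n/\{(n/k_n)^{1/2}M_\ell\}\to\infty$ and the stated bound on $\xi_n$ is more than enough.

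Your point that the misspecified null fit (both $\hat m_\ell$ and $\hat\alpha$) can absorb part of $\tilde m_\ell$ when $X_\ell$ is correlated with $\mathbf{X}^{-\ell}$ is a real issue that the paper ignores. Working with $g$ rather than $\tilde m_\ell$ is the right way to handle it.
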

\begin{proof}
a) Let $\Phi(x)$ denote the cumulative distribution function of the Standard Normal distribution at $x$ and note that the p-value $\pi_\ell= 1 - \Phi\left(\left(\frac{n}{k_n}\right)^{1/2}\left(4\hat{\tau}^2/3\right)^{-1/2}T_n^{\ell}\right)$. Under $H_0^\ell$, $\ell \notin I_0$ and
\bqa
P(\pi_\ell \leq \xi) &=& P\left(1 - \Phi\left(\left(\frac{n}{k_n}\right)^{1/2}\left(4\hat{\tau}^2/3\right)^{-1/2}T_n^{\ell}\right) \leq \xi\right)\\
&=& P\left(\Phi\left(\left(\frac{n}{k_n}\right)^{1/2}\left(4\hat{\tau}^2/3\right)^{-1/2}T_n^{\ell}\right) \geq 1 - \xi\right)\\
&=&P\left(\left(\frac{n}{k_n}\right)^{1/2}\left(4\hat{\tau}^2/3\right)^{-1/2}T_n^{\ell} \geq \Phi^{-1}\left(1 - \xi\right)\right) \leq \xi + o(1),
\eqa
where the inequality follows from Theorem \ref{thm.null} under $H_0^\ell$ and the fact that $\hat{\tau}$ is a consistent estimator of $\tau$.

b) For $\ell \in I_0$ note that, the standardized test statistic under local additive alternatives where $m(\vX_t) = m_\ell(\vX_t^{-\ell}) + \tilde{m}_\ell(X_{\ell t})$  (similarly for general alternatives) can be written as
\bqa
\left(\frac{n}{k_n}\right)^{1/2}\left(4\hat{\tau}^2/3\right)^{-1/2}T_n^{\ell} &=& \left(\frac{n}{k_n}\right)^{1/2}\left(4\hat{\tau}^2/3\right)^{-1/2}\vV'\vA\vV \\
&=& \left(\frac{n}{k_n}\right)^{1/2}\left(4\hat{\tau}^2/3\right)^{-1/2}(\vv_V + \valpha_V + \vm_V)'\vA(\vv_V + \valpha_V + \vm_V) \\
&&+  2\left(\frac{n}{k_n}\right)^{1/2}\left(4\hat{\tau}^2/3\right)^{-1/2}(\vv_V + \valpha_V + \vm_V)'\vA\tilde{\vm}_{V\ell}\nonumber\\
&&+ \left(\frac{n}{k_n}\right)^{1/2}\left(4\hat{\tau}^2/3\right)^{-1/2}\tilde{\vm}_{V\ell}'\vA\tilde{\vm}_{V\ell}\nonumber\\
&=& Z_\ell^{H_0^\ell} + O_p\left(\left(\frac{n}{k_n}\right)^{1/2}M_\ell\right)
\eqa
where $\tilde{\vm}_{V\ell}=(\tilde{m}_\ell(X_{\ell t}), t\in W_1, \cdots, \tilde{m}_\ell(X_{\ell t}), t\in W_n)$, $\vV = (\hat{v}_t, t \in W_1, \ldots, \hat{v}_t, t \in W_n)$, $\valpha_V = \Big((\hat{\alpha}_1 - \alpha_1)\epsilon_{t-1}^2, t \in W_1, \ldots,$ $(\hat{\alpha}_1 - \alpha_1)\epsilon_{t-1}^2, t\in W_n\Big)$, $\vv_V = (v_t,t \in W_1, \ldots, v_t, t \in W_n)$, and  $\vm_V = (({\hat{\vtheta}}^{-\ell}-\vtheta^{-\ell})^T \mathcal{B}(\vX_{t-1}^{-\ell}), t\in W_1, \ldots, ({\hat{\vtheta}}^{-\ell}-\vtheta^{-\ell})^T \mathcal{B}(\vX_{t-1}^{-\ell}), t\in W_n)$.  Note that by Theorem \ref{thm.null}, $Z_\ell^{H_0^\ell}$ converges in distribution to a $N(0,1)$. Let $a_n \to \infty$ such that $a_n = o\left(\left(\frac{n}{k_n}\right)^{1/2}M_\ell\right)$. Then,
\bqa
P(\pi_k > \xi_n) &=& P\left(1 - \Phi\left(\left(\frac{n}{k_n}\right)^{1/2}\left(4\hat{\tau}^2/3\right)^{-1/2}T_n^{\ell}\right) > \xi_n\right) \\
 &=& P\!\left(1 - \Phi\!\left(Z^{H^\ell_0}_\ell + O_p\left(\left(\frac{n}{k_n}\right)^{1/2}M_\ell\right)\right) > \xi_n\right) \\
&\leq& P\!\left(1 - \Phi\!\left(Z^{H^\ell_0}_\ell + O_p\left(\left(\frac{n}{k_n}\right)^{1/2}M_\ell\right)\right) > \xi_n,\; Z^{H^\ell_0}_\ell\geq -a_n\right)  + P\!\left(Z^{H^\ell_0}_\ell < -a_n\right) \\
&\leq& P\!\left(1 - \Phi\!\left(O_p\left(\left(\frac{n}{k_n}\right)^{1/2}M_\ell\right) - a_n\right) > \xi_n\right) + o(1) \;=\; o(1),
\eqa
For the last line, we use the fact that $P\!\left(1 - \Phi\!\left(O_p\left(\left(\frac{n}{k_n}\right)^{1/2}M_\ell\right) - a_n\right) > \xi_n\right)\to 0$ as $n\to \infty$. The explanation is as follows: because $a_n = o\left(\left(\frac{n}{k_n}\right)^{1/2}M_\ell\right)$, it can be shown that 
\[
1-\Phi\left(O_p\left((n/k_n)^{1/2}M_\ell\right) -a_n\right)\le O\left(\exp\left\lbrace-\left((n/k_n)M_\ell^2\right)/4\right\rbrace/\sqrt{n/k_n}M_\ell\right).
\] 
Thus, if $\xi_n$ is set to be $\xi_n > O\left(\exp\left\lbrace-\left((n/k_n)M_\ell^2\right)/4\right\rbrace/\sqrt{n/k_n}M_\ell\right)$, the desired conclusion follows immediately.



\end{proof}

\begin{lemma} \label{lemma:epsilon} Let ${\cal E}_n$ denote the event where the smallest $d_0$ p-values are the p-values corresponding to the $d_0$ significant covariates, i.e.,  with $I_0 = \{j_1,\ldots,j_{d_0}\}$,
\bqa
{\cal E}_n = \big[\{\pi_{(1)},\ldots,\pi_{(d_0)}\} = \{\pi_{j_1},\ldots,\pi_{j_{d_0}}\}\big].
\eqa
Then, under assumptions A1-A11 in \cite{FrancqThieu2019}, we have that 
\bqa
\lim_{n\rightarrow \infty}P({\cal E}_n) = 1.
\eqa
\end{lemma}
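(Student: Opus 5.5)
The event ${\cal E}_n$ holds as soon as every $p$-value of a relevant covariate lies strictly below every $p$-value of an irrelevant one. I will therefore sandwich a vanishing threshold between the two groups and use Lemma~\ref{lemma:A1} on each side. The assumptions A1--A11 of \cite{FrancqThieu2019} are only used to guarantee the $\sqrt{n}$-consistency of the QMLE (condition C2) and the consistency of $\hat\tau$. Under them, both parts of Lemma~\ref{lemma:A1} apply to every $\ell\in I_d$, with the signal condition $(n/k_n)^{1/2}M_\ell\to\infty$ in part b) assumed for $\ell\in I_0$.

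Choose a deterministic sequence $\xi_n\to 0$ that also satisfies the lower bound required in Lemma~\ref{lemma:A1}\,b) for each $\ell\in I_0$. Since $d_0\le d$ is fixed, one can take $\xi_n$ to be the maximum of the $d_0$ required lower bounds plus a vanishing slack, for instance
\[
\xi_n=\max_{\ell\in I_0}\frac{\exp\{-(n/k_n)M_\ell^2/8\}}{(n/k_n)^{1/2}M_\ell}+\frac{1}{\log n}.
\]
This dominates each required rate and still tends to zero. Define the good event
\[
{\cal A}_n=\bigcap_{\ell\in I_0}\{\pi_\ell\le \xi_n\}\ \cap\ \bigcap_{\ell\notin I_0}\{\pi_\ell> \xi_n\}.
\]
On ${\cal A}_n$ every relevant $p$-value is at most $\xi_n$ and every irrelevant one exceeds $\xi_n$. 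Hence the $d_0$ smallest order statistics are exactly $\{\pi_{j_1},\ldots,\pi_{j_{d_0}}\}$, so ${\cal A}_n\subseteq{\cal E}_n$.

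It remains to show $P({\cal A}_n^c)\to0$. By a union bound,
\[
P({\cal A}_n^c)\le\sum_{\ell\in I_0}P(\pi_\ell>\xi_n)+\sum_{\ell\notin I_0}P(\pi_\ell\le\xi_n).
\]
Each term in the first sum is $o(1)$ by Lemma~\ref{lemma:A1}\,b). For the second sum, part a) is stated only for a fixed $\xi$, so I pass to the shrinking threshold as follows. For any fixed $\eta>0$, eventually $\xi_n\le\eta$, so $\limsup_n P(\pi_\ell\le\xi_n)\le\lim_n P(\pi_\ell\le\eta)=\eta$. Letting $\eta\downarrow0$ gives $P(\pi_\ell\le\xi_n)\to0$. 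Because $d$ is finite, both sums vanish, and $P({\cal E}_n)\ge P({\cal A}_n)\to1$.

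The main obstacle is the null side. Part a) of Lemma~\ref{lemma:A1} only identifies the limiting law of $\pi_\ell$ as uniform, and we need $P(\pi_\ell\le\xi_n)\to0$ for a shrinking $\xi_n$. The monotonicity argument above handles this, but only because the uniform limit has a continuous distribution function at $0$. This requires the weak convergence in Theorem~\ref{thm.null} together with $\hat\tau\pkonv\tau$, which I will invoke through Slutsky's lemma. A secondary point is that $\xi_n$ must satisfy the rate condition of part b) simultaneously for all $\ell\in I_0$. Finiteness of $d_0$ settles this by taking the maximum as above.
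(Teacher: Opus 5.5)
Your argument is correct and is essentially the paper's. Both separate the two groups of $p$-values by a threshold, control the irrelevant side with Lemma~\ref{lemma:A1}\,a) and the relevant side with Lemma~\ref{lemma:A1}\,b), and conclude by letting the threshold go to zero; the paper just uses a pairwise union bound over $\ell_2\in I_0$, $\ell_1\notin I_0$ with a fixed $\xi$, obtaining $P({\cal E}_n^c)\le d_0(d-d_0)\xi+o(1)$, instead of your single event ${\cal A}_n$ with a vanishing $\xi_n$. The fixed-$\xi$ version is slightly leaner: part b) is then needed only at a fixed level, via monotonicity in the threshold, and no common $\xi_n$ has to be built from the data-dependent $M_\ell$. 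Your version, by contrast, applies each part of Lemma~\ref{lemma:A1} exactly in the form in which it is stated.
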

\begin{proof}
Let $\xi$ be any number between 0 and 1, and write
\bqa
P({\cal E}_n^c) & \leq & \sum_{\ell_2 \in I_0}\sum_{\ell_1 \notin I_0}P(\pi_{\ell_1} < \pi_{\ell_2}) \nonumber \\
&=& \sum_{\ell_2 \in I_0}\sum_{\ell_1 \notin I_0}\big[P([\pi_{\ell_1} < \pi_{\ell_2}] \cap [\pi_{\ell_1} \leq \xi])  + P([\pi_{\ell_1} < \pi_{\ell_2}] \cap [\pi_{\ell_1} > \xi])\big]\nonumber \\
&\leq& \sum_{\ell_2 \in I_0}\sum_{\ell_1 \notin I_0}\left[P(\pi_{\ell_1}\leq \xi)  + P(\pi_{\ell_2} > \xi)\right]
 \leq  \sum_{\ell_2 \in I_0}\sum_{\ell_1 \notin I_0}\left[\xi + o(1)\right] \ \ \mbox{(by Lemma \ref{lemma:A1})} \nonumber \\
&=& d_0(d - d_0)\xi + o(1).  \nonumber
\eqa
Since $\xi$ is arbitrary, this shows that $\lim_{n \rightarrow \infty}P({\cal E}_n^c) = 0$, completing the proof.
\end{proof}

\begin{thm} \label{thm.consistency}
With $\alpha_n$ the chosen bound of FDR (or Bonferroni), assume that $\alpha_n \to 0$ as $n \to \infty$ in such a way that
\bqa
\alpha_n > O\left(\frac{d}{d_0}\left(\sum_{\ell=1}^d\ell^{-1}\right)\frac{\exp\left\lbrace-\left(\frac{n}{k_n}M_\ell^2\right)/4\right\rbrace}{\sqrt{\frac{n}{k_n}}M_\ell}\right).
\eqa
Then, under assumptions A1-A11 in \cite{FrancqThieu2019}, $$\lim_{n\to \infty}P(\hat{I} = I_0) = 1.$$
\end{thm}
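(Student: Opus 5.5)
The plan is to decompose the event $\{\hat I = I_0\}$ using Lemma~\ref{lemma:epsilon} and Lemma~\ref{lemma:A1}. On the event ${\cal E}_n$, the $d_0$ smallest $p$-values belong to the relevant covariates. Hence $\hat I = I_0$ holds as soon as the Benjamini--Yekutieli index $k$ in (\ref{k.of.fdr}) equals exactly $d_0$. Writing $c_d = \sum_{l=1}^d l^{-1}$, with $q=\alpha_n$, this amounts to two requirements:
\begin{itemize}
\item[(i)] $\pi_{(d_0)} \le \frac{d_0}{d}\frac{\alpha_n}{c_d}$;
\item[(ii)] $\pi_{(\ell)} > \frac{\ell}{d}\frac{\alpha_n}{c_d}$ for every $\ell = d_0+1,\ldots,d$.
\end{itemize}
Therefore
\[
P(\hat I \neq I_0) \le P({\cal E}_n^c) + P\big({\cal E}_n \cap \{\text{(i) fails}\}\big) + P\big({\cal E}_n\cap\{\text{(ii) fails}\}\big).
\]
The first term is $o(1)$ by Lemma~\ref{lemma:epsilon}. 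I will bound the remaining two terms separately.

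For (i), on ${\cal E}_n$ we have $\pi_{(d_0)} = \max_{\ell\in I_0}\pi_\ell$. A union bound gives
\[
P\big(\text{(i) fails}\big) \le \sum_{\ell\in I_0} P\big(\pi_\ell > \xi_n\big), \qquad \xi_n = \frac{d_0}{d}\frac{\alpha_n}{c_d}.
\]
The hypothesis of Theorem~\ref{thm.consistency} is exactly that $\xi_n > O\big(\exp\{-(n/k_n)M_\ell^2/4\}/(\sqrt{n/k_n}M_\ell)\big)$. Since also $\xi_n\to 0$, Lemma~\ref{lemma:A1}(b) applies to each $\ell\in I_0$ and yields $P(\pi_\ell>\xi_n)=o(1)$. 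This step also implicitly requires $(n/k_n)^{1/2}M_\ell\to\infty$, which I will take as part of the standing assumption (it is needed for the stated bound on $\alpha_n$ to be meaningful). Because $d_0$ is fixed, the finite sum is $o(1)$.

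For (ii), it suffices to show that, with probability tending to one, no irrelevant covariate has a $p$-value below $\alpha_n/c_d$. Indeed, $\frac{\ell}{d}\frac{\alpha_n}{c_d}\le \frac{\alpha_n}{c_d}$, and on ${\cal E}_n$ the ordered values $\pi_{(d_0+1)},\ldots,\pi_{(d)}$ are the $p$-values of $\ell\notin I_0$. A cruder but sufficient condition is therefore that every $\pi_{\ell}$ with $\ell\notin I_0$ exceeds $\alpha_n/c_d$; this forces (ii) for every $\ell\ge d_0+1$, so $k=d_0$. The union bound gives
\[
P\big(\text{(ii) fails}\big) \le \sum_{\ell\notin I_0} P\big(\pi_\ell \le \alpha_n/c_d\big).
\]
To bound each term I will use Lemma~\ref{lemma:A1}(a). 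For any fixed $\xi>0$, eventually $\alpha_n/c_d<\xi$, so $\limsup_n P(\pi_\ell\le \alpha_n/c_d)\le \xi$. Letting $\xi\downarrow 0$ shows each term is $o(1)$, and since $d-d_0$ is fixed the sum is $o(1)$. Combining the three bounds gives $P(\hat I = I_0)\to 1$.

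The main obstacle is the step (ii) argument. Lemma~\ref{lemma:A1}(a) is a pointwise statement for fixed $\xi$, so I must pass to the vanishing threshold $\alpha_n$ through the monotone $\limsup$ argument above. This relies on the null limit being continuous: Theorem~\ref{thm.null} together with the consistency of $\hat\tau$ makes $\pi_\ell$ asymptotically uniform, so that $P(\pi_\ell\le\xi)\to\xi$ uniformly in $\xi$ by Pólya's theorem. The Bonferroni version follows from the same argument, with the threshold $\alpha_n/d$ replacing the step-up thresholds.
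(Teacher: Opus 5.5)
Your proposal is correct and follows the paper's proof essentially step for step. It uses the same three-term bound built on ${\cal E}_n$, applies Lemma~\ref{lemma:A1}(b) at the threshold $\frac{d_0}{d}\,\alpha_n/\sum_{l=1}^d l^{-1}$ for the relevant covariates, and applies a union bound over $\ell\notin I_0$ at $\alpha_n/\sum_{l=1}^d l^{-1}$ for the irrelevant ones. Where you differ, you are more careful than the paper. Observing directly that ${\cal E}_n\cap\{k=d_0\}\subseteq\{\hat I=I_0\}$ makes the paper's detour through $P(V\ge 1)$ and Lemma~2.1 of \cite{BuneaEtAl2006} unnecessary. Your monotone $\limsup$ argument properly handles the vanishing null threshold, where the paper simply asserts that $\pi_k$ is Uniform. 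That argument needs only Lemma~\ref{lemma:A1}(a) for each fixed $\xi$, so P\'olya's theorem is not required. You also rightly flag that $(n/k_n)^{1/2}M_\ell\to\infty$ must be assumed.
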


\begin{proof} 
Note that if the estimator $\hat{I}$ is equal to the set $I_0$, we have exactly $d_0$ rejections ($R = d_0$) with none of them being erroneous ($V = 0$). Therefore, consistency of $\hat{I}$ is verified by proving
\begin{equation}
P(\hat{I} = I_0) = P(R = d_0, V = 0) \rightarrow 1,\  \mbox{as}\  n \rightarrow \infty.
\end{equation}
This follows by showing that both $P(R \neq d_0)$ and $P(V \geq 1)$ are asymptotic negligible. By Lemma 2.1 in \cite{BuneaEtAl2006}, we have that 
\bqa
P(V \geq 1) \leq  P(R \neq d_0) + \frac{d_0(d-d_0)}{d}\alpha_n = o(1)
\eqa
by the choice of $\alpha_n \to 0$. as long as $P(R \neq d_0) \rightarrow 0$. Following \cite{BuneaEtAl2006}, consider the two cases: $\{R > d_0\}$ and $\{R < d_0\}$.

 For $\{R > d_0\}$, at least one false positive is made. Then
    \begin{align}
        \exists \ell_0 \not \in I_0 \text{ such that } \pi_{(\ell_0)} \leq \frac{\ell_0}{d}\frac{\alpha_n}{\sum_{l=1}^{d}l^{-1}}.
        \notag
    \end{align}
    This event can be written as
    \begin{align}
        \bigg\{\bigcup_{\ell=d_0+1}^{d}\bigg\{\pi_{(\ell)} \leq \frac{\ell}{d}\frac{\alpha_n}{\sum_{l=1}^{d}l^{-1}}\bigg\}\bigg\}.
        \notag
    \end{align}
    For $\{R < d_0\}$, there is at least one false negative.  Under FDR, this can be expressed  as
    \begin{align}
         \bigg\{\pi_{(d_0)} > \frac{d_0}{d}\frac{\alpha_n}{\sum_{l=1}^{d}l^{-1}}\bigg\}
        \notag
    \end{align}
Consequently, the probability of the event $\{R \neq d_0\}$ can be bounded from above as follows:
    \begin{align} \label{eq.R}
        P(\{R \neq d_0\}) &\leq P\bigg({\cal E}_n\bigcap\bigg\{\pi_{(d_0)} > \frac{d_0}{d}\frac{\alpha_n}{\sum_{l=1}^{d}l^{-1}}\bigg\}\bigg)  \nonumber\\ 
        &\hspace{.5cm} + \sum_{k=d_0+1}^{d}P\bigg({\cal E}_n\bigcap\bigg\{\pi_{(k)} \leq \frac{k}{d}\frac{\alpha_n}{\sum_{l=1}^{d}l^{-1}}\bigg\}\bigg) + P({\cal E}_n^C)
    \end{align}
    Appling Lemma \ref{lemma:epsilon}, it remains to show that the first two terms on the right hand side of equation \eqref{eq.R} go to 0 as the sample size increases.  
    
    By Lemma \ref{lemma:A1} and Lemma \ref{lemma:epsilon}, the first term on the right hand side of equation \eqref{eq.R} can be bounded from above by 
    \begin{align}
        P\left({\cal E}_n\bigcap\left\{\pi_{(d_0)} > \frac{d_0}{d}\frac{\alpha_n}{\sum_{l=1}^{d}l^{-1}}\right\}\right) &\leq d_0\cdot \text{max}_{k\in I_0}P\left(\pi_k > \frac{d_0}{d}\frac{\alpha_n}{\sum_{l=1}^{d}l^{-1}}\right)
        \notag \\
        &= d_0\cdot\text{max}_{k\in I_0}P\left(1 - \Phi\left(\left(\frac{n}{k_n}\right)^{1/2}\left(4\hat{\tau}^2/3\right)^{-1/2}T_n^{\ell}\right) > \frac{d_0}{d}\frac{\alpha_n}{\sum_{l=1}^{d}l^{-1}}\right)
        \notag \\
        &= d_0\cdot\text{max}_{k\in I_0}P\left(\left(\frac{n}{k_n}\right)^{1/2}\left(4\hat{\tau}^2/3\right)^{-1/2}T_n^{\ell} < \Phi^{-1}\left(1 - \frac{d_0}{d}\frac{\alpha_n}{\sum_{l=1}^{d}l^{-1}}\right)\right)
        \notag \\
        &= o(1) \text{ as } n \to \infty
        \notag
    \end{align}
    
    For the second term on the right hand side of (\ref{eq.R}) we have
    \begin{align}
    \sum_{k=d_0+1}^{d}P\bigg({\cal E}_n\bigcap\bigg\{\pi_{(k)} \leq \frac{k}{d}\frac{\alpha_n}{\sum_{l=1}^{d}l^{-1}}\bigg\}\bigg) &\leq \sum_{k=d_0+1}^{d}P\bigg({\cal E}_n\bigcap\bigg\{\pi_{(k)} \leq \frac{\alpha_n}{\sum_{l=1}^{d}l^{-1}}\bigg\}\bigg)
        \notag \\
        &\leq \sum_{k\not \in I_0}P\bigg(\pi_k \leq \frac{\alpha_n}{\sum_{l=1}^{d}l^{-1}}\bigg)
        \notag \\
        &= o(1) \text{ as } n \to \infty,
        \notag
    \end{align}
    since $\pi_k$ follows a Uniform distribution under the null hypothesis $H_0^k$ ($k\notin I_0$).
    
    The proof under the Bonferroni correction follows similarly by noting that the cut-off is $\alpha_n/d$ instead of $\alpha_n/\sum_{l=1}^{d}l^{-1}$.

\end{proof}

\section{Simulations} \label{sec.simulations}

\subsection{Nonparametric Testing}

We first consider a simulation study to identify significant exogenous covariates in an ARCH-$m(\mathbf{X})$ model using the proposed nonparametric hypothesis testing procedure. Implementing the procedure discussed in Section \ref{sec:method} becomes computationally demanding as the number of covariates increases. Specifically, in the B-spline method, each dimension of the exogenous covariate $X_\ell\ (1\leq \ell\leq d)$ is associated with $K_\ell$ parameters $\boldsymbol{\theta}$ in equation (\ref{m_param}), corresponding to the $K_\ell$ basis functions. This estimation process is challenging because optimization in a high-dimensional space is inherently difficult; it increases the risk of convergence to local optimal, potentially failing to reach the true global solution. Consequently, empirical parameter estimation in multidimensional settings requires an extensive number of observations. To overcome this hurdle, we introduce a modified kernel regression approach, adapted from the partially linear model framework to the autoregressive form in equation (\ref{eq:model-new}). We discuss its detailed formulation below.

Given the true parameters $\boldsymbol{\alpha}=(\alpha_1, \alpha_2, \ldots, \alpha_p)$, we have 
\bqa
E(\epsilon_t^2 - \sum_{j=1}^p\alpha_j\epsilon_{t-j}^2) = m(\vX_{t-1})
\eqa
so that a natural estimator of $m(\cdot)$ is the kernel smoothing estimator 
\bqan \label{eq.m_hat}
\hat{m}(\mathbf{x}_{t-1}) &=& \hat{m}(\mathbf{x}_{t-1},\valpha)=\sum_{\ell=p+1}^{n}W_{\ell}(\mathbf{x}_{t-1})\left(\epsilon_\ell^2-\sum_{j=1}^p\alpha_j\epsilon_{\ell-j}^2\right)\\
&=& \begin{bmatrix}
W_{p+1}(\vx_{t-1}) & W_{p+2}(\vx_{t-1}) & \ldots & W_{n}(\vx_{t-1})
\end{bmatrix}\left(\begin{bmatrix}
\epsilon_{p+1}^2\\
\vdots\\
\epsilon_{n}^2
\end{bmatrix} 
-
\begin{bmatrix}
\epsilon_{p+1-1}^2 & \epsilon_{p+1-2}^2 & ... & \epsilon_{p+1-p}^2\\
\epsilon_{p+2-1}^2 & \epsilon_{p+2-2}^2 & ... & \epsilon_{p+2-p}^2\\
\vdots\\
\epsilon_{n-1}^2 & \epsilon_{n-2}^2 & ... & \epsilon_{n-p}^2\\
\end{bmatrix}\begin{bmatrix}
\alpha_{1}\\ \vdots \\ \alpha_p
\end{bmatrix}
\right),\nonumber
\eqan
where
$$W_\ell(\mathbf{x}_{t-1})=\frac{K_h(\mathbf{x}_{t-1}-\mathbf{x}_{\ell-1})}{\sum_{m=p+1}^{n}K_h(\mathbf{x}_{t-1}-\mathbf{x}_{m-1})}$$
and $K_h(x) := K(x/h)$ is a symmetric kernel density.

From (\ref{eq:ARCHmodel}) and (\ref{eq.m_hat}) we can define the estimator of $\valpha$ as a solution to the Least Squares Criterion, where the objective function RSS for minimization can be written in the following form.
\bqa
\text{RSS}&=&\sum_{t=p+1}^n\hat{v}_t^2
=\sum_{t=p+1}^n\left[\epsilon_t^2-\sum_{j=1}^p\alpha_j\epsilon_{t-j}^2-\hat{m}(\mathbf{x}_{t-1})\right]^2\\
&=&\sum_{t = 1}^n\left(\epsilon_t^2 - \sum_{j=1}^p\alpha_j\epsilon_{t-j}^2 -  \sum_{s=1}^n W_{s}(\vX_{t-1})\left(\epsilon_s^2 -  \sum_{j = 1}^p\alpha_j\epsilon_{s-j}^2\right)\right)^2\\
&=&\sum_{t = 1}^n\left(\epsilon_t^2 - \sum_{s=1}^n W_{s}(\vX_{t-1})\epsilon_s^2    - \sum_{j=1}^p\alpha_j\epsilon_{t-j}^2 + \sum_{s=1}^n W_{s}(\vX_{t-1})\left(\sum_{j = 1}^p\alpha_j\epsilon_{s-j}^2\right) \right)^2\nonumber\\
&=&  \sum_{t = 1}^n\left(\tilde{\epsilon}^2_t  - \sum_{j=1}^p\alpha_j\left(\epsilon_{t-j}^2 - \sum_{s=1}^n W_{s}(\vX_{t-1})\epsilon^2_{s-j}\right) \right)^2 \nonumber\\
&=&\sum_{t = 1}^n\left(\tilde{\epsilon}^2_{t,0} - 
 \sum_{j=1}^p\alpha_j\tilde{\epsilon}_{t,j}^2 \right)^2
\eqa
where $\tilde{\epsilon}_{t,k}^2=\epsilon_{t-k}^2-\sum_{\ell=p+1}^TW_\ell(\mathbf{x}_{t-1})\epsilon_{\ell-k}^2$. 

Define
$
\tilde{\vY} = \begin{bmatrix}
\tilde{\epsilon}_{p+1,0}^2 & 
\tilde{\epsilon}_{p+2,0}^2 & 
\ldots & 
\tilde{\epsilon}_{n,0}^2
\end{bmatrix}^T = \vepsilon_{p+1} - \vW\vepsilon_{p+1}
$
and 
\bqa
\tilde{\vX} &=& \begin{bmatrix}
\tilde{\epsilon}_{p+1,1}^2 & \tilde{\epsilon}_{p+1,2}^2 & \cdots & \tilde{\epsilon}_{p+1,p}^2\\
\tilde{\epsilon}_{p+2,1}^2 & \tilde{\epsilon}_{p+2,2}^2 & \cdots & \tilde{\epsilon}_{p+2,p}^2\\
\vdots & \vdots & \ddots & \vdots\\
\tilde{\epsilon}_{n,1}^2 & \tilde{\epsilon}_{n,2}^2 & \cdots & \tilde{\epsilon}_{n,p}^2
\end{bmatrix} = \begin{bmatrix}
\vepsilon_p & \vepsilon_{p-1} & \ldots & \vepsilon_{1}
\end{bmatrix}
-
\vW\begin{bmatrix}
\vepsilon_p & \vepsilon_{p-1} & \ldots & \vepsilon_{1}
\end{bmatrix},
\eqa
where
\bqa
\vepsilon_\ell =  \begin{bmatrix}
\epsilon_\ell\\
\epsilon_{\ell+1}\\
\vdots\\
\epsilon_{n-(p+\ell-1)}
\end{bmatrix}
\mbox{ and } \vW = \begin{bmatrix}
W_{p+1}(\vx_{p+1-1}) & W_{p+2}(\vx_{p+1-1}) & ... & W_{n}(\vx_{p+1-1})\\
W_{p+1}(\vx_{p+2-1}) & W_{p+2}(\vx_{p+2-1}) & ... & W_{n}(\vx_{p+2-1})\\
...\\
W_{p+1}(\vx_{n-1}) & W_{p+2}(\vx_{n-1}) & ... & W_{n}(\vx_{n-1})\\
\end{bmatrix}.
\eqa

In addition, define the matrix of kernels as
$$
\vK_h = \begin{bmatrix}
K_h(\vx_{p+1} - \vx_{p+1}) & K_h(\vx_{p+2} - \vx_{p+1}) & \ldots & K_h(\vx_{n} - \vx_{p+1})\\
K_h(\vx_{p+1} - \vx_{p+2}) & K_h(\vx_{p+2} - \vx_{p+2}) & \ldots & K_h(\vx_{n} - \vx_{p+2})\\
\vdots\\
K_h(\vx_{p+1} - \vx_{n}) & K_h(\vx_{p+2} - \vx_{n}) & \ldots & K_h(\vx_{n} - \vx_{n})\\
\end{bmatrix}
$$
and note that $
\vW = \text{diag}(\vK\vone)^{-1}\vK
$
where $\vone = (1,1,1,\ldots, 1)^T$.
Then, RSS can be written as
\bqa
\text{RSS} = \sum_{t=p+1}^n\left(\tilde{\epsilon}_{t,0}^2-\sum_{k=1}^p\alpha_k\tilde{\epsilon}_{t,k}^2\right)^2 
= \|\tilde{\mathbf{Y}}-\tilde{\mathbf{X}}\alpha\|^2
\eqa
so that the Least Squares solution for $\valpha$ is
\bqa
\hat{\valpha} = \arg\min_{\valpha} || \tilde{\vY} - \tilde{\vX}\valpha||^2 = (\tilde{\vX}'\tilde{\vX})^{-1}\tilde{\vX}\tilde{\vY}.
\eqa
After the parameter estimates $\hat{\valpha} := (\hat{\alpha}_1, \ldots, \hat{\alpha}_p)$ are obtained, they can be used to estimate $m(\cdot)$ as in equation (\ref{eq.m_hat}).

In what follows, we present comprehensive simulation studies that compare the performance of the proposed nonparametric testing procedure with the benchmark method in \cite{FrancqThieu2019}, where the latter assumes a linear functional form for $m(\mathbf{x})$ in the model defined in Equation \eqref{eq:ARCHmodel}. Specifically, we consider two scenarios: one where $m(\mathbf{x})$ is a statistical functional of 2 covariates and another where there are 5 covariates available. In addition, we assess the performance of the tests using eight different models in each setting, where the model can be a linear, quadratic, or a non-linear function of the covariates in either an additive or non-additive form. In particular, Model 7 in each scenario assumes a linear form for $m(\mathbf{x})$, which agrees with the underlying assumption for the benchmark method by \cite{FrancqThieu2019}. Therefore, one would expect the benchmark to outperform the proposed method in that case. 

All sixteen models under consideration are summarized in Table \ref{tab:models}. In the table, $\epsilon_t=\sigma_t z_t$, $\sigma_t^2 = m_t(\mathbf{x})$, and the parameter $c$ takes values from $\{0,0.1,0.2,\ldots,1\}$, which reflects the strength of the signal for the active covariates. Given a 2-covariate model, the null hypothesis of interest is $H_0: m(\mathbf{x})=m_2(\mathbf{x}^{-2}) = m_2(x_1)$. In contrast, the null hypothesis for each 5-covariate model is $H_0: m(\mathbf{x})=m_3(\mathbf{x}^{-3})$. In all cases the covariates are generated with a Multivariate Normal distribution with exponential decay covariance, that is, $\vX_t \sim N(2.5,\Sigma)$, where $\Sigma_{ij} = \rho^{-|i-j|}$, with $\rho = 0$ or $\rho = 0.5$.

We generated 500 independent samples of size $n=500\text{ or }1000$ in each scenario and computed the percent of times the null hypothesis is rejected at a significance level 0.05. We considered three different distributions for the shock $z$, namely, the Standard Normal distribution, the Laplace distribution with a scale parameter of 0.5, and a $t$ distribution with 7 degrees of freedom. In the 2-covariate setting, the proposed method was evaluated using both the B-spline and the kernel smoothing estimation, while for the 5-covariate case, due to the challenge in high-dimensional parameter estimation, only kernel smoothing was used. Note that when $c=0$, the calculated relative frequency is the empirical level of the test, while it represents the statistical power of the test for $c>0$.

Figures \ref{fig2}-\ref{fig14} display the comparison of the rejection rates between the proposal and the benchmark for sample size $n=1000$ and $\rho=0$ with the Standard Normal shock distribution. 
Figures 1-24 in the Supplementary Material display the comparison for all scenarios considered. 

\begin{table}[http]
\begin{center}
\caption{Models under Consideration in Nonparametric Testing}\label{tab:models}
{\small
\begin{tabular}{c|l|l}
\hline
Model & Parameter & \multicolumn{1}{c}{2-Covariate Scenario}\\
\hline
1& $\valpha = 0.3$ & $m(\mathbf{x})=0.2 + 0.3\epsilon_{t-1}^\delta + (1/2)(x_{1, t-1}- 2.5)^2 +2c(x_{2,t-1} - 2.5)^2$\\    
2 & $\valpha = 0.3$ &  $m(\mathbf{x})=0.2 +  0.3\epsilon_{t-1}^\delta + (1/2)(x_{1,t-1}+ 2cx_{2,t-1}- 10)^2$\\
3 &  $\valpha = (0.3, 0.2)$ &  $m(\mathbf{x})=0.2 + 0.3\epsilon_{t-1}^\delta + 0.2\epsilon_{t-2}^\delta + (1/2)(x_{1,t-1} - 2.5)^2 + 2c(x_{2,t-1} - 2.5)^2$ \\  
4 & $\valpha = (0.3, 0.2)$ &  $m(\mathbf{x})=0.2 +  0.3\epsilon_{t-1}^\delta + 0.2\epsilon_{t-2}^\delta + (1/2)(x_{1,t-1} + 2cx_{2,t-1} - 10)^2$ \\
5 &  $\valpha = 0.4$ & $m(\mathbf{x})=0.2 +  0.4\epsilon_{t-1}^\delta  + \{\sin(\pi x_{1,t-1}/2) \}^2+ 10c\{\sin(\pi x_{2,t-1}/2)\}^2$\\   
6 &   $\valpha = 0.4$ & $m(\mathbf{x})=0.2 +  0.4\epsilon_{t-1}^\delta  + 10c\{\sin(x_{1,t-1}/2+\pi c x_{2,t-1}/2)\}^2$ \\   
7 &   $\valpha = 0.4$ & $m(\mathbf{x})=0.2+0.4\epsilon_{t-1}^\delta + (1/4)x_{1,t-1} + cx_{2,t-1}$ \\
8 &   $\valpha = 0.4$ & $m(\mathbf{x})=0.2 +  0.4\epsilon_{t-1}^\delta + cx_{1,t-1}x_{2,t-1}$ \\
 \hline
Model & Parameter & \multicolumn{1}{c}{5-Covariate Scenario}\\
\hline
1&  $\valpha = 0.3$ &  $m(\mathbf{x})=0.2 +  0.3\epsilon_{t-1}^\delta + (1/2)\sum_{i\ne 3}(x_{i,t-1} - 2.5)^2 + 5c(x_{3,t-1}-2.5)^2$ \\
2 &  $\valpha = 0.3$ &  $m(\mathbf{x})=0.2+0.3\epsilon_{t-1}^\delta + (1/2)(x_{1,t-1} + x_{2,t-1} + 5c x_{3,t-1} + x_{4,t-1} + x_{5,t-1} - 25)^2$  \\
3&   $\valpha = (0.3, 0.2)$ & $m(\mathbf{x})=0.2+  0.3\epsilon_{t-1}^\delta + 0.2\epsilon_{t-2}^\delta + (1/2)\sum_{i\ne 3}(x_{i,t-1} - 2.5)^2 + 5c (x_{3,t-1}-2.5)^2 $\\
4 &   $\valpha = (0.3, 0.2)$ & $m(\mathbf{x})=0.2+0.3\epsilon_{t-1}^\delta + 0.2\epsilon_{t-2}^\delta + (1/2)(x_{1,t-1}+ x_{2,t-1} + 5c x_{3,t-1} + x_{4,t-1} + x_{5,t-1}  - 25)^2$  \\
5 &   $\valpha = 0.4$ & $m(\mathbf{x})=0.2+  0.4\epsilon_{t-1}^\delta  + \sum_{i\ne 3}\sin(\pi x_{i,t-1}/2) +10c\sin(\pi x_{3,t-1}/2)$ \\
6 &  $\valpha = 0.4$ &  $m(\mathbf{x})=0.2 + 0.4\epsilon_{t-1}^\delta +\{\sin(\pi x_{t-1,1}/2)\}^2\{\sin(\pi x_{t-1,2}/2)\}^2$\\
&& $+  10c\{\sin(\pi x_{t-1,3}/2)\}^2\{\sin(\pi x_{t-1,4}/2)\}^2 + \{\sin(\pi x_{t-1,5}/2)\}^2$  \\
7 &  $\valpha = 0.4$ &  $m(\mathbf{x})=0.2+  0.4\epsilon_{t-1}^\delta + (1/4)\sum_{i\ne 3}x_{i,t-1}+ 8cx_{3,t-1}$ \\
8 &   $\valpha = 0.4$ & $m(\mathbf{x})=0.2+  0.4\epsilon_{t-1}^\delta + cx_{t-1,3}\sum_{i\ne3}x_{t-1,i}$ \\
\hline
\end{tabular}
}
\end{center}
\end{table}

The simulation results suggest that in the 2-covariate scenarios, the proposed method (implemented via kernel smoothing or B-splines) consistently outperforms the benchmark, achieving significantly higher rejection rates for $c>0$ across Models 1–6. This trend persists across various error distributions and holds for both independent and correlated covariates. While \cite{FrancqThieu2019}'s approach maintains an expected advantage in Models 7 and 8, the proposed approach shows dramatic improvements elsewhere, where the benchmark fails to detect nonlinear functions. Note that Model 6 appears particularly challenging to fit and perform inference for, especially under Laplace errors. Nevertheless, the proposed methodology shows clear advantage over the benchmark in this difficult case. Both kernel and B-splines versions of the proposal yield comparable results. Generally, rejection rates improve with larger sample sizes, and performance is superior under Normal or t error distributions compared to Laplace errors. Furthermore, all methods maintain a nominal size (at $c=0$) relatively close to $0.05$.

In the 5-covariate scenario, the proposed method with kernel estimation demonstrates a significantly higher rejection rate for $c>0$ in Models 1–6. Even in Models 7 and 8, its performance remains competitive with the benchmark under Normal error distributions, regardless of covariate correlation or sample size.

\begin{figure}[http]
\begin{center}
\includegraphics[scale=0.7]{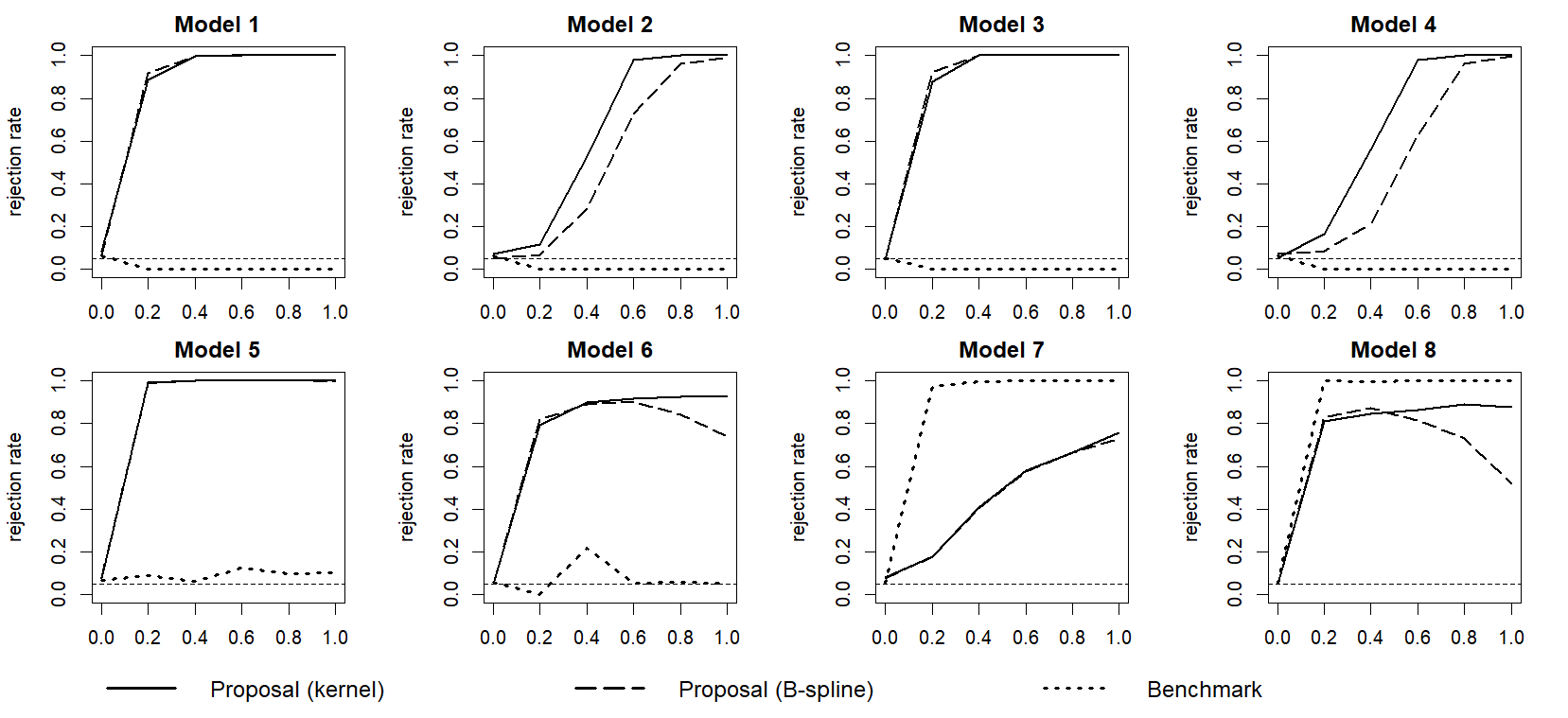}
\caption{Rejection rates against parameter $c$ given a 2-covariate model and sample size $n=1000$ under a normal error distribution with independent covariates ($\rho=0$). The solid curve represents results of the proposal realized by the kernel smoothing method, and the dashed curve corresponds to the benchmark.}\label{fig2}
\end{center}
\end{figure}

\begin{figure}[http]
\begin{center}
\includegraphics[scale=0.7]{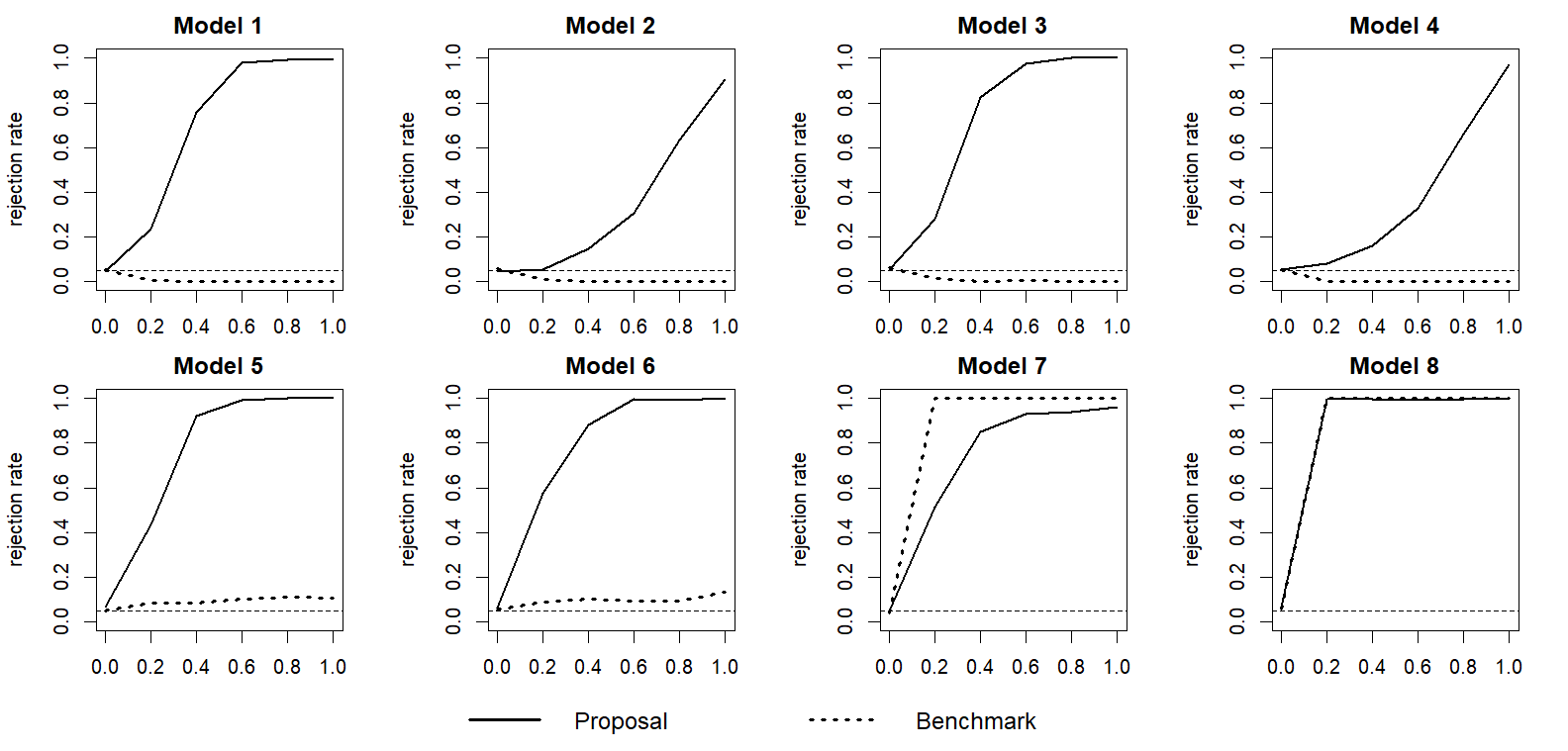}
\caption{Rejection rates against parameter $c$ given a 5-covariate model and sample size $n=1000$ under a normal error distribution with independent covariates ($\rho=0$). The solid curve represents results of the proposal realized by the kernel smoothing method, and the dotted curve corresponds to the benchmark. The horizontal dashed line marks the nominal level of 0.05.}\label{fig14}
\end{center}
\end{figure}

\subsection{Variable Selection}

In this section, we examine the finite-sample performance of the proposed 
variable selection procedure within the ARCH framework under nonparametric 
covariate effects (denoted as ARM), as described in Section~\ref{sec.variable_selection}. 
In parallel with the hypothesis testing experiments, we compare the proposed 
approach with competing methods, namely, the conventional linear method assuming $m(X)$ is of a linear form \citep{FrancqThieu2019} and the Lasso method where $m(X)$ is of a linear form with a Lasso penalty, under three innovation distributions—Standard Normal, 
Student-$t$, and Laplace—considering both $\rho = 0$ and $\rho = 0.5$. The simulation design further varies the number of exogenous covariates  ($d = 5$ and $10$) and the sample size ($n = 500, 1000, 5000,$ and $10000$). Table~\ref{tab:models_d_10} summarizes the sixteen data-generating models examined for both $d = 5$ and $d = 10$.

\begin{table}[http]
\begin{center}
\caption{Models under consideration}\label{tab:models_d_10}
{\small
\begin{tabular}{c|l|l}
\hline
Model &  Parameter & \multicolumn{1}{c}{5-Covariate Scenario}\\
\hline
1 & $\valpha = 0.3$ &
$m(\mathbf{x})=
2 (x_{1,t-1}-2.5)^2
+2 (x_{3,t-1}-2.5)^2
+2 (x_{4,t-1}-2.5)^2$\\

2 & $\valpha = 0.3$ &
$m(\mathbf{x})=
\frac{1}{2}\Big(2 x_{1,t-1}+2 x_{3,t-1}+2 x_{4,t-1}-15\Big)^2$\\

3 & $\valpha = (0.3,\;0.2)$ &
$m(\mathbf{x})=
2 (x_{1,t-1}-2.5)^2
+2 (x_{3,t-1}-2.5)^2
+2 (x_{4,t-1}-2.5)^2$\\

4 & $\valpha = (0.3,\;0.2)$ &
$m(\mathbf{x})=
\frac{1}{2}\Big(2 x_{1,t-1}+2 x_{3,t-1}+2 x_{4,t-1}-15\Big)^2$\\

5 & $\valpha = 0.4$ &
$m(\mathbf{x})=
8 \sin^2\!\Big(\frac{\pi x_{1,t-1}}{2}\Big)
+8 \sin^2\!\Big(\frac{\pi x_{3,t-1}}{2}\Big)
+8 \sin^2\!\Big(\frac{\pi x_{4,t-1}}{2}\Big)$\\

6 & $\valpha = 0.4$ &
$m(\mathbf{x})=
10 \sin^2\!\Big(\frac{\pi x_{1,t-1}}{2}\Big)\sin^2\!\Big(\frac{\pi x_{3,t-1}}{2}\Big)
+10 \sin^2\!\Big(\frac{\pi x_{4,t-1}}{2}\Big)$\\

7 & $\valpha = 0.4$ &
$m(\mathbf{x})=
8 \,x_{1,t-1}+8 \,x_{3,t-1}+8 \,x_{4,t-1}$\\

8 & $\valpha = 0.4$ &
$m(\mathbf{x})=
 \,x_{3,t-1}\big(x_{1,t-1}+x_{4,t-1}\big)$\\
 \hline
 \hline
Model & Parameter & \multicolumn{1}{c}{10-Covariate Scenario}\\
\hline
1 & $\valpha = 0.3$ &
$m(\mathbf{x})=
2(x_{1,t-1}-2.5)^2
+2(x_{3,t-1}-2.5)^2
+2(x_{4,t-1}-2.5)^2
+2(x_{5,t-1}-2.5)^2
+2(x_{9,t-1}-2.5)^2$\\

2 & $\valpha = 0.3$ &
$m(\mathbf{x})=
\frac{1}{2}\Big(
2x_{1,t-1}
+2x_{3,t-1}
+2x_{4,t-1}
+2x_{5,t-1}
+2x_{9,t-1}
-25
\Big)^2$\\

3 & $\valpha = (0.3,\;0.2)$ &
$m(\mathbf{x})=
2(x_{1,t-1}-2.5)^2
+2(x_{3,t-1}-2.5)^2
+2(x_{4,t-1}-2.5)^2
+2(x_{5,t-1}-2.5)^2
+2(x_{9,t-1}-2.5)^2$\\

4 & $\valpha = (0.3,\;0.2)$ &
$m(\mathbf{x})=
\frac{1}{2}\Big(
2x_{1,t-1}
+2x_{3,t-1}
+2x_{4,t-1}
+2x_{5,t-1}
+2x_{9,t-1}
-15
\Big)^2$\\

5 & $\valpha = 0.4$ &
$m(\mathbf{x})=
8\sin^2\!\Big(\frac{\pi x_{1,t-1}}{2}\Big)
+8\sin^2\!\Big(\frac{\pi x_{3,t-1}}{2}\Big)
+8\sin^2\!\Big(\frac{\pi x_{4,t-1}}{2}\Big)
+8\sin^2\!\Big(\frac{\pi x_{5,t-1}}{2}\Big)
+8\sin^2\!\Big(\frac{\pi x_{9,t-1}}{2}\Big)$\\

6 & $\valpha = 0.4$ &
$m(\mathbf{x})=
10\sin^2\!\Big(\frac{\pi x_{1,t-1}}{2}\Big)
\sin^2\!\Big(\frac{\pi x_{3,t-1}}{2}\Big)
+10\sin^2\!\Big(\frac{\pi x_{4,t-1}}{2}\Big)
\sin^2\!\Big(\frac{\pi x_{5,t-1}}{2}\Big)
+10\sin^2\!\Big(\frac{\pi x_{9,t-1}}{2}\Big)$\\

7 & $\valpha = 0.4$ &
$m(\mathbf{x})=
8x_{1,t-1}
+8x_{3,t-1}
+8x_{4,t-1}
+8x_{5,t-1}
+8x_{9,t-1}$\\

8 & $\valpha = 0.4$ &
$m(\mathbf{x})=
(x_{1,t-1}+x_{4,t-1})x_{3,t-1}
+x_{5,t-1}x_{9,t-1}$\\
\hline
\end{tabular}
}
\end{center}
\end{table}

Although the candidate covariate space is moderately sized, only a strict subset of 
variables truly contributes to the conditional variance dynamics. For example, in 
the five-covariate specifications, only covariates $x_{1,t-1}$, $x_{3,t-1}$, and 
$x_{4,t-1}$ are relevant to the volatility function, while the remaining covariates are purely 
noise. Similarly, in the ten-covariate designs, only $x_{1,t-1}$, $x_{3,t-1}$, 
$x_{4,t-1}$, $x_{5,t-1}$, and $x_{9,t-1}$ are relevant, with the other variables 
having no structural effect. This sparse but structured signal creates a setting 
in which irrelevant exogenous covariates may appear spuriously associated with volatility 
due to sampling variability, especially in finite samples. Moreover, the presence 
of nonlinear, interaction, and non-additive components poses further  challenges to
variable selection, as standard linear screening methods may fail to capture the true 
functional form of the dependence. 

In this section we also generated the covariates with a Multivariate Normal distribution with exponential decay covariance as in the previous section. After 200 Monte Carlo simulation runs we compute the mean number of correctly selected  covariates, the mean number of incorrectly selected covariates, and their conterparts number of correctly excluded and incorrectly excluded covariates. In addition, we compute the percentage of simulation runs that each covariate was selected. 

In Tables \ref{tab:5-covariate} and \ref{tab:10-covariate}, we report the average counts for correctly selected (C.S.), incorrectly selected (I.S.), correctly excluded (C.E.), and incorrectly excluded (I.E.) variables, computed over 200 independent iterations. For the 5-covariate scenario, which includes two relevant and three non-relevant covariates, the ideal values for C.S. and C.E. are 2 and 3, respectively. In the 10-covariate case, which includes five relevant and five non-relevant covariates, the ideal count for both C.S. and C.E. is 5. Tables \ref{tab:5-covariate-M1} and \ref{tab:10-covariate-M1} present percentage of simulation runs each covariate was selected in Model 1 with Normal shock distribution and $n=1000$ 
for the 5-covariate and 10-covariate models respectively.
The complete simulation results across all simulation settings can be found in the Supplementary Material.

The results in Tables~\ref{tab:5-covariate} and \ref{tab:10-covariate} show a clear contrast between the proposed procedure and its linear competitors when the conditional variance depends on nonlinear covariate effects. In the nonlinear additive and quadratic designs (Models 1--5), ARM generally attains substantially larger numbers of correctly selected variables together with very small numbers of incorrectly selected variables. This pattern is particularly pronounced in the 5-covariate setting under Normal innovations, where the proposed method selects the correct variables and excludes the irrelevant ones with high probability in Models 1--5, while the linear method of \cite{FrancqThieu2019} fails almost completely in the purely nonlinear specifications and LASSO tends to recover only part of the active set while selecting many irrelevant variables. These findings are consistent with the fact that the signal enters through nonlinear transformations, so methods based on linear screening are not well aligned with the true structure of the volatility function.

The advantage of the proposed procedure is especially visible in models containing interactions and other non-additive effects. In Models 6 and 8, where the relevant covariates act jointly rather than only through marginal linear contributions, the proposed method still recovers a substantially larger portion of the active set than the competing procedures in the Normal shock case, while keeping false selections close to zero. This indicates that the proposed method is able to detect variables that are important through their contribution to a nonlinear regression surface, even when their effect may be weak or entirely hidden from a linear main-effect representation. By contrast, the Linear and LASSO procedures can only succeed to the extent that the nonlinear signal induces an incidental linear association, which explains their weaker and less stable performance across the nonlinear designs.

The linear benchmark in Model~7 further clarifies this comparison. In that case, the simple linear procedure performs best, as expected, because the model exactly matches its assumption, whereas the proposed method is not designed specifically for purely linear effects.  The covariate-wise selection frequencies in Tables~\ref{tab:5-covariate-M1} and \ref{tab:10-covariate-M1} reinforce this conclusion. As the sample size increases, the proposed method increasingly concentrates selection on the truly active variables and largely excludes inactive ones, whereas LASSO selects relevant and irrelevant covariates with similar frequencies and the linear method misses the nonlinear signals altogether. The weaker performance in the $d=10$ setting reflects the greater difficulty of recovering the active set when the number of irrelevant covariates increases, since the signal becomes more diluted and finite-sample separation between relevant and non-relevant variables becomes harder. This effect is especially noticeable for nonlinear and interaction-based specifications, where a higher  dimension requires substantially more information and sample size for the method to accurately detect the variables entering the conditional variance function.

\begin{table}[htbp]
\centering
\footnotesize
\setlength{\tabcolsep}{5pt}
\caption{Simulation Results for $d=5, n=10000, \rho=0$.}\label{tab:5-covariate}
\begin{tabular}{ccc|rrr|rrr|rrr}
\hline
\multirow{1}{*}{} &
\multirow{3}{*}{\textbf{Model}} &
\multirow{3}{*}{\textbf{Metric}} &
\multicolumn{9}{c}{\textbf{Shock Distribution}} \\

\multicolumn{3}{c|}{} &
\multicolumn{3}{c}{\textbf{Normal}} &
\multicolumn{3}{c}{\textbf{T}} &
\multicolumn{3}{c}{\textbf{Laplace}} \\

\multicolumn{3}{c|}{}  &  
Linear & LASSO & ARM &
Linear & LASSO & ARM &
Linear & LASSO & ARM \\
\hline

& \multirow{4}{*}{1} 
& C.S. & 0.00 & 1.67 & 3.00 & 0.00 & 1.53 & 2.97 & 0.00 & 1.46 & 2.90 \\
& & I.S. & 0.08 & 1.03 & 0.06 & 0.01 & 1.01 & 0.04 & 0.01 & 0.99 & 0.04 \\
& & C.E. & 1.92 & 0.97 & 1.95 & 1.99 & 0.98 & 1.96 & 1.99 & 1.00 & 1.97 \\
& & I.E. & 3.00 & 1.33 & 0.00 & 3.00 & 1.47 & 0.03 & 3.00 & 1.54 & 0.10 \\
\cmidrule(lr){2-12}

& \multirow{4}{*}{2} 
& C.S. & 0.00 & 1.45 & 2.98 & 0.00 & 1.50 & 2.26 & 0.00 & 1.44 & 1.82 \\
& & I.S. & 0.02 & 1.24 & 0.02 & 0.01 & 1.15 & 0.04 & 0.01 & 1.16 & 0.04 \\
& & C.E. & 1.98 & 0.76 & 1.98 & 1.99 & 0.85 & 1.96 & 2.00 & 0.84 & 1.97 \\
& & I.E. & 3.00 & 1.55 & 0.01 & 3.00 & 1.50 & 0.74 & 3.00 & 1.56 & 1.18 \\
\cmidrule(lr){2-12}

& \multirow{4}{*}{3} 
& C.S. & 0.00 & 1.50 & 3.00 & 0.00 & 1.51 & 2.98 & 0.00 & 1.49 & 2.88 \\
& & I.S. & 0.12 & 1.14 & 0.06 & 0.06 & 1.06 & 0.04 & 0.04 & 0.99 & 0.04 \\
& & C.E. & 1.88 & 0.86 & 1.95 & 1.94 & 0.94 & 1.97 & 1.96 & 1.00 & 1.97 \\
& & I.E. & 3.00 & 1.50 & 0.00 & 3.00 & 1.49 & 0.03 & 3.00 & 1.51 & 0.12 \\
\cmidrule(lr){2-12}

& \multirow{4}{*}{4} 
& C.S. & 0.00 & 1.43 & 2.94 & 0.00 & 1.47 & 2.23 & 0.00 & 1.44 & 1.67 \\
& & I.S. & 0.15 & 1.15 & 0.04 & 0.06 & 1.09 & 0.04 & 0.04 & 1.23 & 0.01 \\
& & C.E. & 1.85 & 0.85 & 1.96 & 1.94 & 0.91 & 1.97 & 1.97 & 0.78 & 2.00 \\
& & I.E. & 3.00 & 1.57 & 0.06 & 3.00 & 1.53 & 0.77 & 3.00 & 1.56 & 1.33 \\
\cmidrule(lr){2-12}

& \multirow{4}{*}{5} 
& C.S. & 0.25 & 1.50 & 2.99 & 0.69 & 1.28 & 2.21 & 0.21 & 1.23 & 1.51 \\
& & I.S. & 0.09 & 0.81 & 0.04 & 0.29 & 0.86 & 0.05 & 0.09 & 0.85 & 0.03 \\
& & C.E. & 1.91 & 1.19 & 1.97 & 1.71 & 1.14 & 1.95 & 1.92 & 1.15 & 1.98 \\
& & I.E. & 2.75 & 1.50 & 0.01 & 2.31 & 1.72 & 0.79 & 2.79 & 1.77 & 1.49 \\
\cmidrule(lr){2-12}

& \multirow{4}{*}{6} 
& C.S. & 0.15 & 1.58 & 2.92 & 0.12 & 1.39 & 2.15 & 0.34 & 1.44 & 1.77 \\
& & I.S. & 0.06 & 0.97 & 0.04 & 0.02 & 0.91 & 0.06 & 0.14 & 0.78 & 0.03 \\
& & C.E. & 1.94 & 1.03 & 1.97 & 1.98 & 1.09 & 1.94 & 1.86 & 1.22 & 1.97 \\
& & I.E. & 2.85 & 1.42 & 0.07 & 2.88 & 1.61 & 0.85 & 2.65 & 1.56 & 1.23 \\
\cmidrule(lr){2-12}

& \multirow{4}{*}{7} 
& C.S. & 3.00 & 1.83 & 1.19 & 3.00 & 2.30 & 0.33 & 3.00 & 2.40 & 0.17 \\
& & I.S. & 0.01 & 0.57 & 0.03 & 0.01 & 0.77 & 0.01 & 0.01 & 0.74 & 0.01 \\
& & C.E. & 2.00 & 1.43 & 1.98 & 1.99 & 1.23 & 2.00 & 2.00 & 1.26 & 1.99 \\
& & I.E. & 0.00 & 1.17 & 1.81 & 0.01 & 0.70 & 2.67 & 0.00 & 0.59 & 2.83 \\
\cmidrule(lr){2-12}

& \multirow{4}{*}{8} 
& C.S. & 1.97 & 1.99 & 2.69 & 1.75 & 1.88 & 1.68 & 1.60 & 1.85 & 1.42 \\
& & I.S. & 0.00 & 0.49 & 0.04 & 0.00 & 0.58 & 0.04 & 0.00 & 0.52 & 0.03 \\
& & C.E. & 2.00 & 1.50 & 1.97 & 2.00 & 1.42 & 1.96 & 2.00 & 1.48 & 1.97 \\
& & I.E. & 1.03 & 1.01 & 0.30 & 1.25 & 1.12 & 1.32 & 1.40 & 1.15 & 1.58 \\
\hline
\end{tabular}
\end{table}

\begin{table}[htbp]
\centering
\footnotesize
\setlength{\tabcolsep}{5pt}
\caption{Simulation Results for $d=10, n=10000, \rho=0$.}\label{tab:10-covariate}
\begin{tabular}{ccc|rrr|rrr|rrr}
\hline
\multirow{1}{*}{} &
\multirow{3}{*}{\textbf{Model}} &
\multirow{3}{*}{\textbf{Metric}} &
\multicolumn{9}{c}{\textbf{Shock Distribution}} \\

\multicolumn{3}{c|}{} &
\multicolumn{3}{c}{\textbf{Normal}} &
\multicolumn{3}{c}{\textbf{T}} &
\multicolumn{3}{c}{\textbf{Laplace}} \\

\multicolumn{3}{c|}{}&  
Linear & LASSO & ARM &
Linear & LASSO & ARM &
Linear & LASSO & ARM \\
\hline

& \multirow{4}{*}{1} & C.S. & 0.03 & 2.65 & 3.92 & 0.00 & 2.44 & 1.51 & 0.00 & 2.48 & 0.76 \\
& & I.S. & 0.92 & 2.65 & 0.04 & 0.16 & 2.55 & 0.03 & 0.12 & 2.46 & 0.01 \\
& & C.E. & 4.08 & 2.35 & 4.96 & 4.84 & 2.45 & 4.97 & 4.88 & 2.54 & 4.99 \\
& & I.E. & 4.97 & 2.35 & 1.08 & 5.00 & 2.56 & 3.49 & 5.00 & 2.52 & 4.25 \\
\cmidrule(lr){2-12}
& \multirow{4}{*}{2} & C.S. & 0.00 & 2.40 & 0.66 & 0.00 & 2.54 & 0.19 & 0.00 & 2.59 & 0.19 \\
& & I.S. & 0.34 & 2.59 & 0.04 & 0.08 & 2.58 & 0.01 & 0.07 & 2.60 & 0.04 \\
& & C.E. & 4.67 & 2.41 & 4.96 & 4.92 & 2.42 & 5.00 & 4.92 & 2.40 & 4.96 \\
& & I.E. & 5.00 & 2.60 & 4.34 & 5.00 & 2.46 & 4.81 & 5.00 & 2.41 & 4.81 \\
\cmidrule(lr){2-12}
& \multirow{4}{*}{3} & C.S. & 0.00 & 2.63 & 3.73 & 0.00 & 2.45 & 1.31 & 0.00 & 2.62 & 0.74 \\
& & I.S. & 0.83 & 2.56 & 0.06 & 0.14 & 2.71 & 0.03 & 0.07 & 2.62 & 0.01 \\
& & C.E. & 4.17 & 2.44 & 4.94 & 4.86 & 2.29 & 4.97 & 4.92 & 2.38 & 4.99 \\
& & I.E. & 5.00 & 2.37 & 1.27 & 5.00 & 2.55 & 3.69 & 5.00 & 2.38 & 4.25 \\
\cmidrule(lr){2-12}
& \multirow{4}{*}{4} & C.S. & 4.81 & 3.90 & 4.29 & 4.15 & 4.25 & 2.31 & 3.88 & 4.44 & 1.47 \\
& & I.S. & 0.00 & 1.73 & 0.06 & 0.00 & 2.00 & 0.06 & 0.00 & 2.23 & 0.06 \\
& & C.E. & 5.00 & 3.27 & 4.94 & 5.00 & 3.00 & 4.95 & 5.00 & 2.77 & 4.94 \\
& & I.E. & 0.19 & 1.10 & 0.71 & 0.85 & 0.76 & 2.69 & 1.11 & 0.56 & 3.54 \\
\cmidrule(lr){2-12}
& \multirow{4}{*}{5} & C.S. & 0.85 & 2.54 & 0.53 & 0.43 & 2.27 & 0.12 & 0.12 & 2.27 & 0.06 \\
& & I.S. & 0.64 & 2.55 & 0.01 & 0.27 & 2.23 & 0.02 & 0.10 & 2.38 & 0.01 \\
& & C.E. & 4.36 & 2.45 & 4.99 & 4.74 & 2.77 & 4.98 & 4.90 & 2.62 & 5.00 \\
& & I.E. & 4.15 & 2.46 & 4.47 & 4.57 & 2.73 & 4.88 & 4.88 & 2.73 & 4.94 \\
\cmidrule(lr){2-12}
& \multirow{4}{*}{6} & C.S. & 1.44 & 2.64 & 1.95 & 0.23 & 2.58 & 0.99 & 0.11 & 2.64 & 0.74  \\
& & I.S. & 0.90 & 2.66 & 0.04 & 0.15 & 2.56 & 0.05 & 0.09 & 2.53 & 0.05 \\
& & C.E. & 4.10 & 2.34 & 4.96 & 4.85 & 2.44 & 4.95 & 4.91 & 2.48 & 4.96 \\
& & I.E. & 3.56 & 2.36 & 3.05 & 4.78 & 2.42 & 4.01 & 4.89 & 2.37 & 4.26 \\
\cmidrule(lr){2-12}
& \multirow{4}{*}{7} & C.S. & 5.00 & 3.10 & 0.07 & 4.71 & 3.83 & 0.04 &4.20 & 3.97 & 0.01 \\
& & I.S. & 0.03 & 1.92 & 0.02 & 0.02 & 2.25 & 0.01 &0.01 & 2.22 & 0.03\\
& & C.E. & 4.97 & 3.08 & 4.98 & 4.98 & 2.75 & 5.00 & 4.99 & 2.79 & 4.97 \\
& & I.E. & 0.01 & 1.90 & 4.92 & 0.29 & 1.17 & 4.96 & 0.80 & 1.04 & 4.99\\
\cmidrule(lr){2-12}
& \multirow{4}{*}{8} & C.S. & 4.50 & 3.12 & 1.14 & 3.36 & 3.04 & 0.40 & 2.98 & 2.93 & 0.25 \\
& & I.S. & 0.00 & 2.22 & 0.03 & 0.00 & 2.02 & 0.02 & 0.00 & 2.25 & 0.02 \\
& & C.E. & 5.00 & 2.78 & 4.97 & 5.00 & 2.98 & 4.98 & 5.00 & 2.75 & 4.98\\
& & I.E. & 0.50 & 1.88 & 3.86 & 1.64 & 1.96 & 4.61 & 2.02 & 2.08 & 4.75 \\
\hline
\end{tabular}
\end{table}

\begin{table}[http!]
\centering
\caption{Proportion of covariate selection out of 500 iterations with normal shock distribution based on the 5-covariate Model 1. The active covariates are $X_1, X_3$ and $X_4$.}
\label{tab:5-covariate-M1}
\begin{tabular}{cclccccc}
\toprule
$\rho$ & $n$ & Method & $X_1$ & $X_2$ & $X_3$ & $X_4$ & $X_5$ \\
\midrule
\multirow{12}{*}{0.0} & \multirow{3}{*}{500} & Linear & 0.000 & 0.010 & 0.000 & 0.000 & 0.020 \\
 & & Lasso & 0.580 & 0.565 & 0.615 & 0.575 & 0.590 \\
 & & ARM & 0.180 & 0.010 & 0.120 & 0.170 & 0.025 \\
\cmidrule{2-8}
 & \multirow{3}{*}{1000} & Linear & 0.000 & 0.045 & 0.000 & 0.000 & 0.015 \\
 & & Lasso & 0.500 & 0.540 & 0.465 & 0.545 & 0.615 \\
 & & ARM & 0.410 & 0.025 & 0.375 & 0.355 & 0.010 \\
\cmidrule{2-8}
 & \multirow{3}{*}{5000} & Linear & 0.000 & 0.015 & 0.000 & 0.000 & 0.020 \\
 & & Lasso & 0.535 & 0.555 & 0.490 & 0.525 & 0.470 \\
 & & ARM & 0.995 & 0.025 & 1.000 & 0.990 & 0.010 \\
\cmidrule{2-8}
 & \multirow{3}{*}{10000} & Linear & 0.000 & 0.040 & 0.000 & 0.000 & 0.040 \\
 & & Lasso & 0.570 & 0.515 & 0.575 & 0.525 & 0.515 \\
 & & ARM & 1.000 & 0.030 & 1.000 & 1.000 & 0.025 \\
\midrule
\multirow{12}{*}{0.5} & \multirow{3}{*}{500} & Linear & 0.000 & 0.000 & 0.000 & 0.000 & 0.005 \\
 & & Lasso & 0.595 & 0.655 & 0.610 & 0.725 & 0.660 \\
 & & ARM & 0.215 & 0.040 & 0.305 & 0.360 & 0.035 \\
\cmidrule{2-8}
 & \multirow{3}{*}{1000} & Linear & 0.000 & 0.000 & 0.000 & 0.000 & 0.000 \\
 & & Lasso & 0.560 & 0.580 & 0.510 & 0.550 & 0.545 \\
 & & ARM & 0.395 & 0.065 & 0.660 & 0.640 & 0.010 \\
\cmidrule{2-8}
 & \multirow{3}{*}{5000} & Linear & 0.000 & 0.000 & 0.000 & 0.000 & 0.000 \\
 & & Lasso & 0.485 & 0.480 & 0.520 & 0.510 & 0.540 \\
 & & ARM & 1.000 & 0.360 & 1.000 & 1.000 & 0.065 \\
\cmidrule{2-8}
 & \multirow{3}{*}{10000} & Linear & 0.000 & 0.000 & 0.000 & 0.000 & 0.010 \\
 & & Lasso & 0.565 & 0.505 & 0.525 & 0.500 & 0.510 \\
 & & ARM & 1.000 & 0.705 & 1.000 & 1.000 & 0.100 \\
\bottomrule
\end{tabular}
\end{table}

\begin{table}[http!]
\centering
\caption{Proportion of covariate selection out of 500 iterations with normal shock distribution based on the 10-covariate Model 1. The active covariates are $X_1, X_3, X_4, X_5,\text{ and }X_9$.}\label{tab:10-covariate-M1}
\resizebox{\textwidth}{!}{
\begin{tabular}{cclcccccccccc}
\toprule
$\rho$ & $n$ & Method & $X_1$ & $X_2$ & $X_3$ & $X_4$ & $X_5$ & $X_6$ & $X_7$ & $X_8$ & $X_9$ & $X_{10}$ \\
\midrule
\multirow{12}{*}{0.0} & \multirow{3}{*}{500} & Linear & 0.000 & 0.010 & 0.000 & 0.000 & 0.000 & 0.020 & 0.010 & 0.010 & 0.005 & 0.010 \\
  &   & Lasso & 0.630 & 0.620 & 0.635 & 0.650 & 0.695 & 0.670 & 0.605 & 0.655 & 0.665 & 0.665 \\
  &   & ARM & 0.015 & 0.010 & 0.015 & 0.005 & 0.020 & 0.000 & 0.005 & 0.010 & 0.005 & 0.000 \\
\cmidrule{2-13}
  & \multirow{3}{*}{1000} & Linear & 0.000 & 0.005 & 0.000 & 0.000 & 0.000 & 0.000 & 0.005 & 0.020 & 0.000 & 0.015 \\
  &   & Lasso & 0.540 & 0.640 & 0.575 & 0.575 & 0.535 & 0.605 & 0.625 & 0.550 & 0.560 & 0.550 \\
  &   & ARM & 0.020 & 0.005 & 0.060 & 0.040 & 0.030 & 0.005 & 0.005 & 0.010 & 0.050 & 0.015 \\
\cmidrule{2-13}
  & \multirow{3}{*}{5000} & Linear & 0.000 & 0.045 & 0.000 & 0.000 & 0.000 & 0.040 & 0.050 & 0.060 & 0.000 & 0.040 \\
  &   & Lasso & 0.490 & 0.560 & 0.595 & 0.475 & 0.535 & 0.495 & 0.500 & 0.530 & 0.510 & 0.555 \\
  &   & ARM & 0.420 & 0.015 & 0.325 & 0.305 & 0.390 & 0.000 & 0.005 & 0.005 & 0.450 & 0.015 \\
\cmidrule{2-13}
  & \multirow{3}{*}{10000} & Linear & 0.015 & 0.190 & 0.000 & 0.005 & 0.010 & 0.175 & 0.200 & 0.150 & 0.000 & 0.205 \\
  &   & Lasso & 0.525 & 0.580 & 0.520 & 0.530 & 0.555 & 0.520 & 0.525 & 0.485 & 0.525 & 0.535 \\
  &   & ARM & 0.795 & 0.015 & 0.810 & 0.745 & 0.780 & 0.010 & 0.020 & 0.000 & 0.785 & 0.000 \\
\midrule
\multirow{12}{*}{0.5} & \multirow{3}{*}{500} & Linear & 0.000 & 0.000 & 0.000 & 0.000 & 0.000 & 0.000 & 0.000 & 0.000 & 0.000 & 0.000 \\
  &   & Lasso & 0.690 & 0.690 & 0.620 & 0.685 & 0.580 & 0.620 & 0.680 & 0.705 & 0.640 & 0.620 \\
  &   & ARM & 0.035 & 0.015 & 0.055 & 0.055 & 0.050 & 0.005 & 0.000 & 0.000 & 0.040 & 0.005 \\
\cmidrule{2-13}
  & \multirow{3}{*}{1000} & Linear & 0.000 & 0.000 & 0.000 & 0.000 & 0.000 & 0.000 & 0.000 & 0.000 & 0.000 & 0.000 \\
  &   & Lasso & 0.560 & 0.575 & 0.600 & 0.665 & 0.620 & 0.600 & 0.605 & 0.540 & 0.540 & 0.535 \\
  &   & ARM & 0.020 & 0.005 & 0.095 & 0.170 & 0.125 & 0.000 & 0.010 & 0.000 & 0.020 & 0.015 \\
\cmidrule{2-13}
  & \multirow{3}{*}{5000} & Linear & 0.000 & 0.000 & 0.000 & 0.000 & 0.000 & 0.000 & 0.000 & 0.000 & 0.000 & 0.030 \\
  &   & Lasso & 0.495 & 0.560 & 0.460 & 0.535 & 0.470 & 0.485 & 0.515 & 0.555 & 0.470 & 0.505 \\
  &   & ARM & 0.445 & 0.065 & 0.785 & 0.935 & 0.765 & 0.035 & 0.010 & 0.020 & 0.350 & 0.020 \\
\cmidrule{2-13}
  & \multirow{3}{*}{10000} & Linear & 0.000 & 0.015 & 0.000 & 0.000 & 0.000 & 0.005 & 0.015 & 0.005 & 0.000 & 0.100 \\
  &   & Lasso & 0.500 & 0.435 & 0.450 & 0.570 & 0.435 & 0.475 & 0.525 & 0.475 & 0.460 & 0.565 \\
  &   & ARM & 0.890 & 0.135 & 0.980 & 0.975 & 0.970 & 0.050 & 0.010 & 0.015 & 0.810 & 0.030 \\
\bottomrule
\end{tabular}
}
\end{table}

\section{Real Data Analysis} \label{sec.real_data}



Modeling the volatility of financial returns is a central problem in financial econometrics. 
Equity index returns, such as those of the S\&P 500, are well known to exhibit volatility clustering, 
heavy tails, and time-varying conditional heteroskedasticity. 
ARCH-type models provide a natural framework for capturing such features by allowing the conditional variance to evolve dynamically as a function of past shocks.

However, financial markets do not evolve in isolation. 
Macroeconomic conditions, global equity markets, commodity prices, and currency fluctuations 
can all influence the level and persistence of volatility in U.S. equity markets. 
In particular, the S\&P 500 is beleived to be linked to global market movements, 
energy and metal prices, and exchange rate dynamics due to international trade, capital flows, 
and supply chain interdependence. 
Ignoring these external influences may lead to misspecified volatility dynamics 
and omitted-variable bias in the conditional variance equation.

To address this issue, we fit the proposed ARCH-$m(X)$ with $p=1$ and
nonparametric covariate effects in the variance equation. 
This framework allows the conditional variance of S\&P 500 log returns 
to depend flexibly on a set of exogenous covariates, without imposing restrictive linear functional forms. 
The nonparametric component is particularly appealing because 
the relationship between volatility and global or commodity factors 
may be nonlinear.

The primary time series under consideration is the log return of the S\&P 500 index, defined as
\[
r_t = \log(P_t) - \log(P_{t-1}),
\]
where $P_t$ denotes the S\&P 500 closing price at time $t$. 
Similarly, for each covariate we consider log returns to ensure stationarity 
and comparability across series.
The set of candidate covariates includes: 
     China Market (SSE Composite Index - Shanghai Stock Exchange),
     Asia Market (iShares MSCI All Country Asia ex Japan ETF),
     Crude Oil (West Texas Intermediate (WTI) Crude Oil Futures),
     USD Index (ICE U.S. Dollar Index),
     Gold (Gold Continuous Futures - COMEX),
     Copper (Copper Continuous Futures - COMEX),
     Silver (Silver Continuous Futures - COMEX),
     Steel (SLX - VanEck Steel ETF),
     Rice (CBOT Rough Rice Futures),
     Wheat (SLX - VanEck Steel ETF), and
     Europe Market (EURO STOXX 50 Index).

These variables represent major global equity markets (China, Asia, Europe), 
energy markets (Crude Oil), currency markets (USD Index), 
precious and industrial metals (Gold, Copper, Silver, Steel Proxy), 
and agricultural commodities (Rice, Wheat). 
Together, they provide a broad representation of global macro-financial conditions.

Although incorporating many covariates increases modeling flexibility, 
including irrelevant variables can inflate estimation variance and 
reduce interpretability.
In high-dimensional volatility modeling, variable selection is therefore crucial. 
It ensures parsimony, improves statistical efficiency, 
and helps identify economically meaningful drivers of market volatility.

To determine which covariates significantly contribute to the conditional variance, 
we compute adjusted $p$-values using the proposed nonparametric test-based variable selection with False Discovery Rate (FDR) corrections 
to control for multiple testing. 
The adjusted $p$-values are:
$\approx$ 1 for China Market,  
$\approx$ 0 for Asia Market 
$\approx$ 0.00009 for Crude Oil
$\approx$ 1 for USD Index
$\approx$ 0.002 for Gold
$\approx$ 1 for Copper
$\approx$ 0.082 for Silver 
$\approx$ 0 for Steel
$\approx$ 1 for Rice
$\approx$ 0.061 for Wheat, and
$\approx$ 0.0000002 for Europe Market.
Using a significance threshold of $0.05$ after FDR correction, 
we selected the following five covariates: Asia Market, Crude Oil, Gold, Steel, and Europe Market.

The selected variables provide economically intuitive insights. 
For example, the significance of both the Asia Market and the Europe Market highlight the strong interdependence between U.S. equity volatility 
and other major global equity markets. This finding reflects the globalization of financial markets and the rapid transmission of shocks across regions. In addition, Crude Oil is also significant, consistent with the role of energy prices as a key driver of macroeconomic uncertainty and corporate profitability. It is well known that fluctuations in oil prices can directly affect inflation expectations, production costs, and sectoral performance, which in turn influence aggregate market volatility. Moreover, Gold appears as a significant covariate, which is particularly noteworthy. Gold is often viewed as a safe-haven asset; its price movements may reflect shifts in risk sentiment. A significant nonparametric effect of Gold on S\&P 500 volatility suggests that changes in investor risk aversion are closely tied to equity market uncertainty. Furthermore, Steel is also selected, indicating that industrial commodity dynamics may contain information about economic activity and global demand conditions. 
Its significance suggests that real-sector fundamentals are embedded in volatility dynamics.

In contrast, the China Market, the USD Index, Copper, Rice, and Wheat 
are not selected at the $0.05$ level after FDR correction, 
while Silver and Wheat show borderline significance 
(with adjusted $p$-values slightly above $0.05$). 
This indicates that, conditional on the selected variables, 
their marginal contribution to explaining S\&P 500 volatility 
is limited in this specification.

Overall, the results demonstrate that only a subset of global and commodity variables 
exerts a statistically significant influence on S\&P 500 volatility. 
The ARCH-$m(X)$ framework combined with FDR-based variable selection 
provides a principled approach to isolating these key drivers, 
leading to a more parsimonious and interpretable volatility model.

\section{Appendix}
\subsection{Proof of Theorem \ref{thm.null}}
\begin{proof}

The proof of Theorem \ref{thm.null} is given for the ARCH(1)-m(X) case. The proof for ARCH(p)-m(X) can be shown in a similar way. In this proof we follow steps similar to those in \cite{ZambomGel2020}.

Given the estimators $\hat{\alpha}_1$ and $\hat{m}(\cdot)$ as in equation (\ref{eq.v_hat},) write the null hypothesis residuals as
\begin{eqnarray}
\hat{v}_{t} &=&  \epsilon_t^2  - \hat{\alpha}_1\epsilon_{t-1}^2 - \hat{m}_\ell(\vX^{-\ell}_{t-1})\nonumber\\
&=&  \epsilon_t^2 - \alpha_1\epsilon_{t-1}^2 - m_\ell(\vX^{-\ell}_{t-1})  - (\hat{\alpha}_1 - \alpha_1)\epsilon_{t-1}^2 - (\hat{m}_\ell(\vX^{-\ell}_{t-1}) - m_\ell(\vX^{-\ell}_{t-1}))   \nonumber\\
&=&v_t -  (\hat{\alpha}_1 - \alpha_1)\epsilon_{t-1}^2 - ({\hat{\vtheta}^{-\ell}}-\vtheta^{-\ell})^T \mathcal{B}(\vX_{t-1}^{-\ell}).
\end{eqnarray}

Let $\vV = (\hat{v}_t, t \in W_1, \ldots, \hat{v}_t, t \in W_n)$, $\valpha_V = \Big((\hat{\alpha}_1 - \alpha_1)\epsilon_{t-1}^2, t \in W_1, \ldots,$ $(\hat{\alpha}_1 - \alpha_1)\epsilon_{t-1}^2, t\in W_n\Big)$, $\vv_V = (v_t,t \in W_1, \ldots, v_t, t \in W_n)$, and  $\vm_V = (({\hat{\vtheta}}^{-\ell}-\vtheta^{-\ell})^T \mathcal{B}(\vX_{t-1}^{-\ell}), t\in W_1, \ldots, ({\hat{\vtheta}}^{-\ell}-\vtheta^{-\ell})^T \mathcal{B}(\vX_{t-1}^{-\ell}), t\in W_n)$. The statistic $T_n = MST_n - MSE_n$ can be written as 
\bqan \label{T-stat}
T = \vV'\vA\vV =\vv_V'\vA\vv_V +  \valpha_V'\vA\valpha_V + \vm_V'\vA\vm_V  + 2\vv_V'\vA\valpha_V + 2\vv_V'\vA\vm_V  + 2\valpha_V'\vA\vm_V, 
\eqan
where 
\bqan\label{def.matrixA}
\vA = \frac{nk_n-1}{n(n-1)k_n(k_n-1)}\oplus_{i=1}^{n} J_{k_n}-\frac{1}{n(n-1)k_n} J_{nk_n}-\frac{1}{n(k_n-1)} I_{nk_n},
\eqan
$I_r$ is a identity matrix of dimension $r$, $J_r$ is a $r$x$r$ matrix of 1's and $\oplus$ is the Kronecker sum or direct sum.

In what follows, using steps similar to those in \citet{WangEtAl2010}, we show that asymptotic distribution of $n^{1/2}k_n^{-1/2}\vv_V'\vA\vv_V$ converges to a normal distribution and all other terms are $o_p(n^{-1/2}k_n^{1/2})$. First, note that
\bqa
\vv_V'\vA\vv_V = \frac{1}{n(k_n - 1)}\sum_{t=1}^n\sum_{t_1\neq t_2}^n v_{t_1}v_{t_2}I(t_1, t_2 \in W_t),
\eqa
where $v_t$ is white noise and $E(\vv_V'\vA\vv_V) = 0$. Hence, the variance of $\vv_V'\vA\vv_V$ can be computed as
\bqa
&&E(\vv_V'\vA\vv_V)^2 \\
&=& \frac{1}{n^2(k_n - 1)^2}\sum_{i_1=1}^n\sum_{i_2=1}^n\sum_{t_1\neq s_1}^n\sum_{t_2\neq s_2}^n E(v_{t_1}v_{s_1}v_{t_2}v_{s_2})I(t_1, s_1 \in W_{i_1}, t_2, s_2 \in W_{i_2})\\
&=& \frac{1}{n^2(k_n - 1)^2}\sum_{i_1=1}^n\sum_{i_2=1}^n\sum_{t \neq s}^n \tau^2I(t, s \in W_{i_1}\cap W_{i_2})\\
&& \to \begin{cases} 4\tau^2/3 & \text{ if } k_n \to \infty.\\
 \frac{2(2k_n -1)}{3(k_n-1)}\tau^2 & \text{ if } k_n \text{ is fixed}.
 \end{cases}
\eqa

To prove the asymptotic normality of $\vv_V'\vA\vv_V$, we first express $\vv_V'\vA\vv_V$ in the form of 
$\vv_V'\vA\vv_V = \frac{1}{n}\sum_{t=1}^nA_t = \frac{1}{n}S_n$, where $A_t = (1/(k_n-1))\sum_{t_1\neq t_2}v_{t_1}v_{t_2}I(t_1, t_2 \in W_t)$, and define
\bqa
U_{nt} &=& A_{(t-1)(b_n+l_n)+1} + \ldots + A_{(t-1)(b_n+l_n)+b_n}\\
V_{nt} &=& A_{(t-1)(b_n+l_n)+b_n+1} + \ldots + A_{t(b_n+l_n)},
\eqa
$t = 1, \ldots, r_n$, where
$b_n \sim n^{2/3}k_n^{1/3}, l_n \sim k_n, r_n \sim n/b_n = n^{1/3}k_n^{-1/3},$
so that
\bqa
S_n = \sum_{t=1}^{r_n}U_{nt} + \sum_{t=1}^{r_n}V_{nt}.
\eqa
It can be shown that $\sum_{t=1}^{r_n}V_{nt} = o_p((nk_n)^{-1/2})$. Since $U_{nt}$ are uncorrelated, Normality  $U_{nt}$ is established using  the $\rho$-mixing theorem
in \citep{Peligrad1987}.

In what follows, we will show that the remainder of all the terms on the right hand side of equation \eqref{T-stat} converge in probability to zero.

For the second term in equation \eqref{T-stat}, note that
\begin{eqnarray}\label{alpha_alpha}
\sqrt{\frac{n}{k_n}}\valpha_V'\vA\valpha_V &=& \sqrt{\frac{n}{k_n}}\frac{(nk_n-1)}{n(n-1)k_n(k_n-1)}(\hat{\alpha}_1 - \alpha_1)^2\sum_{t=1}^{n}\left(\sum_{\ell\in W_t}\epsilon_{\ell-1}^2\right)^2 
\\
&-&\sqrt{\frac{n}{k_n}}\frac{k_n}{n(n-1)}(\hat{\alpha}_1 - \alpha_1)^2\left(\sum_{t=1}^{n}\epsilon_{t-1}^2\right)^2 - \sqrt{\frac{n}{k_n}}\frac{k_n}{n(k_n-1)}(\hat{\alpha}_1 - \alpha_1)^2\sum_{t=1}^{n}\epsilon_{t-1}^4.\nonumber
\end{eqnarray}

In view of an ARCH model as defined in equation \eqref{eq:ARCHmodel} and given that $\epsilon_t^2$ can be represented as an autoregressive process of order $1$ as in equation \eqref{eq:model-new}, we find that under the null hypothesis $E(\epsilon_t^2)=m_i(\vX^{-i}_{t-1})/(1-\alpha_1)$ (the general case when $p > 1$ can be shown in a similar way to be $E(\epsilon_t^2)=m_i(\vX^{-i}_{t-1})/(1-\sum_{j=1}^p\alpha_j)$) and $E(\epsilon_t^4) = E(z_t^4)E(\epsilon_t^2)$ (see e.g. \cite{GiraitisEtAl2000}). Hence, using the fact that $(\hat{\alpha}_1 - \alpha_1) = O_p(n^{-1/2})$ (\cite{FrancqThieu2019}), the third therm on the right hand side of equation (\ref{alpha_alpha}) is $O_p(n^{-1/2}k_n^{-1/2}n^{-1}n) = o_p(1)$.
Additionally, the second term on the right hand side of (\ref{alpha_alpha}) is $O_p(n^{-3/2}k_n^{1/2}n^{-1}n^2) = O_p(n^{-1/2}k_n^{1/2}) = o_p(1)$. Finally, the first term on the right hand side of (\ref{alpha_alpha}) is $O_p(n^{-1/2}k_n^{-3/2}n^{-1}nk_n^2) = O_p(n^{-1/2}k_n^{1/2}) = o_p(1)$.

For the fourth term on the right-hand-side of equation \eqref{T-stat}, note that
\bqan  \label{v_alpha}
 \sqrt{\frac{n}{k_n}}\vv_V'\vA\valpha_V&=& \sqrt{\frac{n}{k_n}}\frac{(nk_n -1)}{n(n-1)k_n(k_n-1)}(\hat{\alpha}_1 - \alpha_1)\sum_{t=1}^{n}\sum_{s \in W_t}v_s\sum_{s\in W_t}\epsilon^2_{s-1} 
\\
 &-&\sqrt{\frac{n}{k_n}}\frac{k_n}{n(n-1)}(\hat{\alpha}_1 - \alpha_1)\sum_{t=1}^{n}v_t\sum_{s=1}^{n}\epsilon^2_{s-1} -\sqrt{\frac{n}{k_n}} \frac{k_n}{n(k_n-1)}(\hat{\alpha}_1 - \alpha_1)\sum_{t=1}^{n}v_t\epsilon^2_{t-1}. \nonumber
\eqan
 Using the fact that under the null hypothesis (and $p = 1$)
$\epsilon_t^2 =  \alpha_1\epsilon_{t-1}^2 + m_i(\vX^{-\ell}_{t-1}) + v_t$ and $v_t$ is white noise, it is easy to see that $E(v_t\epsilon^2_{t-1}) = 0$, and $E(\sum_{t=1}^nv_t\epsilon^2_{t-1})^2 = \sum_{t=1}^nE(v_t^2\epsilon^4_{t-1}) + 2\sum_{t < s}E(v_t\epsilon^2_{t-1}v_s\epsilon^2_{s-1}) = n(\tau m_i(\vX_{t-2}^{-\ell}) + \tau^2 + 2\alpha_1 m_\ell(\vX_{t-2}^{-\ell})\tau m_\ell(\vX_{t-3}^{-\ell})/(1-\alpha_1)) + 0 = O(n)$. Hence, the last term on  the right hand side of (\ref{v_alpha}) is of order $O_p(n^{-1/2}k_n^{-1/2}n^{-1/2}n^{1/2}) = o_p(1)$.
By the Marcinkiewicz-Zygmund inequality for weakly dependent processes (\cite{DedeckerEtAl2007} - Theorem 4.1), $\sum_{t=1}^n\epsilon^2_{t-1} = O_p(n^{1/2})$. In addition, by the Martingale Central Limit Theorem, $\sum_{t=1}^nv_t = O_p(n^{1/2})$. Consequently, the second term on the right hand side of equation (\ref{v_alpha}) is of order $O_p(n^{-3/2}k_n^{1/2}n^{-1/2}n^{1/2}n^{1/2}) = o_p(1)$. Similarly, the first term on the right hand side of (\ref{v_alpha}) is of order $O_p(n^{-1/2}k_n^{-3/2}n^{-1/2}nk_n^{1/2}k_n^{1/2}) = o_p(1)$.

Now, using the fact that the B-Splines basis functions are bounded by some constant $M$, we re-express the fifth term in equation \eqref{T-stat} in the following form
\bqan  \label{v_m}
\sqrt{\frac{n}{k_n}}\vv_V'\vA\vm_V 
 &=&  \sqrt{\frac{n}{k_n}}\frac{(nk_n -1)}{n(n-1)k_n(k_n-1)}({\hat{\vtheta}}^{-\ell}-\vtheta^{-\ell})^T\sum_{t=1}^{n}\sum_{k \in W_t}v_k\sum_{s\in W_t} \mathcal{B}(\vX_{s-1}^{-\ell})\nonumber\\
&& \hspace{-.5cm} - \sqrt{\frac{n}{k_n}}\frac{k_n}{n(n-1)}({\hat{\vtheta}}^{-\ell}-\vtheta^{-\ell})^T\sum_{t=1}^{n}v_t\sum_{s=1}^{n} \mathcal{B}(\vX_{s-1}^{-\ell})  \nonumber\\
&& -\sqrt{\frac{n}{k_n}} \frac{k_n}{n(k_n-1)}({\hat{\vtheta}}^{-\ell}-\vtheta^{-\ell})^T\sum_{t=1}^{n}v_t \mathcal{B}(\vX_{t-1}^{-\ell}) \nonumber\\
&=&  \sqrt{\frac{n}{k_n}}\frac{(nk_n -1)}{n(n-1)k_n(k_n-1)}({\hat{\vtheta}}^{-\ell}-\vtheta^{-\ell})^T\sum_{t=1}^{n} \mathcal{B}(\vX_{t-1}^{-\ell})\sum_{s \in W_t}v_s\nonumber\\
&-& \sqrt{\frac{n}{k_n}}\frac{k_n}{n(n-1)}({\hat{\vtheta}}^{-\ell}-\vtheta^{-\ell})^T\sum_{t=1}^{n}v_t\sum_{s=1}^{n} \mathcal{B}(\vX_{s-1}^{-\ell})  \nonumber\\
&-&\sqrt{\frac{n}{k_n}} \frac{k_n}{n(k_n-1)}({\hat{\vtheta}}^{-\ell}-\vtheta^{-\ell})^T\sum_{t=1}^{n}v_t \mathcal{B}(\vX_{t-1}^{-\ell})\nonumber\\
&&+ \sqrt{\frac{n}{k_n}}\frac{(nk_n -1)}{n(n-1)k_n(k_n-1)}O_p(n^{-1}k_n^2)\sum_{t=1}^{n}\left|\sum_{s \in W_t}v_s\right|,
\eqan
where the last equality follows from condition C3 and Lemma 1.0.2 in \cite{ZambomAkritas2014}.
Because $v_t$ is white noise, it is easy to see that conditionally on $\vX$, $E(v_t\mathcal{B}(\vX_{t-1}^{-\ell})) = 0$ and 
\bqa
E\left\{\left[v_t\mathcal{B}(\vX_{t-1}^{-\ell})\right]^2\right\} = \tau\mathcal{B}^2(\vX_{t-1}^{-\ell}).
\eqa
Thus, since each of the $d$ components of $({\hat{\vtheta}}^{-i}-\vtheta^{-i})$ is $O_p(n^{-1/2})$, the third term on the right hand side of (\ref{v_m}) is of order $O_p(n^{-1/2}k_n^{-1/2}n^{-1/2}n^{1/2}) = o_p(1)$. Similarly, the second term on the right hand side of (\ref{v_m}) is of order $O_p(n^{-3/2}k_n^{1/2}n^{-1/2}n^{1/2}n) = O_p(n^{-1/2}k_n^{1/2}) = o_p(1)$. The first term on the right hand side of (\ref{v_m}) is of order $O_p(n^{-1/2}k_n^{-3/2}n^{-1/2}nk_n^{1/2}) = O_p(k_n^{-1}) = o_p(1)$. The last term on the right hand side of (\ref{v_m}) is of order $O_p(n^{-1/2}k_n^{-3/2}n^{-1}k_n^2nk_n) = O_p(n^{-1/2}k_n^{3/2}) = o_p(1)$ by the choice of $k_n$.

Next, re-write the third term in equation \eqref{T-stat}, i.e., $\vm_V'\vA\vm_V$, as
\bqan\label{m_m}
\sqrt{\frac{n}{k_n}}\vm_V'\vA\vm_V \hspace{-.2cm} &=& \hspace{-.2cm} \sqrt{\frac{n}{k_n}}\frac{(nk_n-1)}{n(n-1)k_n(k_n-1)}\sum_{t=1}^{n}\left(\sum_{s\in W_t}({\hat{\vtheta}}^{-\ell}-\vtheta^{-\ell})^T \mathcal{B}(\vX_{s-1}^{-\ell})\right)^2 \nonumber \\
&-&\sqrt{\frac{n}{k_n}}\frac{k_n}{n(n-1)}\left(\sum_{t=1}^{n}({\hat{\vtheta}}^{-\ell}-\vtheta^{-\ell})^T \mathcal{B}(\vX_{t-1}^{-\ell})\right)^2\nonumber\\
&-&\sqrt{\frac{n}{k_n}}\frac{k_n}{n(k_n-1)}\sum_{t=1}^{n}(({\hat{\vtheta}}^{-\ell}-\vtheta^{-\ell})^T \mathcal{B}(\vX_{t-1}^{-\ell}))^2.
\eqan
The first term on the right hand side of (\ref{m_m}) is of order $O_p(n^{-1/2}k_n^{-3/2}n(k_nn^{-1/2})^2) = o_p(1)$, the second term is of order $O_p(n^{-3/2}k_n^{1/2}(nn^{-1/2})^2) = o_p(1)$, while the last term is of order $O_p(n^{-1/2}k_n^{-1/2}nn^{-1/2}) = o_p(1)$.

Consider the last term in equation \eqref{T-stat}:
\bqan  \label{alpha_m}
\sqrt{\frac{n}{k_n}}\valpha_V'\vA\vm_V &=& \sqrt{\frac{n}{k_n}}\frac{(nk_n -1)}{n(n-1)k_n(k_n-1)}(\hat\alpha_1 - \alpha_1)\sum_{t=1}^{n}\sum_{s \in W_t} \epsilon^2_{s-1}\sum_{\ell\in W_t}({\hat{\vtheta}}^{-\ell}-\vtheta^{-\ell})^T \mathcal{B}(\vX_{s-1}^{-\ell}) \nonumber \\
 &-& \sqrt{\frac{n}{k_n}}\frac{k_n}{n(n-1)}(\hat\alpha_1 - \alpha_1)\sum_{t=1}^{n}\epsilon^2_{t-1}\sum_{\ell=1}^{n}({\hat{\vtheta}}^{-\ell}-\vtheta^{-\ell})^T \mathcal{B}(\vX_{s-1}^{-\ell}) \nonumber\\
 &-&\sqrt{\frac{n}{k_n}} \frac{k_n}{n(k_n-1)}(\hat\alpha_1 - \alpha_1)\sum_{t=1}^{n}\epsilon^2_{t-1}({\hat{\vtheta}}^{-\ell}-\vtheta^{-\ell})^T \mathcal{B}(\vX_{t-1}^{-\ell}).
\eqan
The first term is of order $O_p(n^{-1/2}k_n^{-3/2}n^{-1/2}nk_n^{1/2}k_nn^{-1/2}) = o_p(1)$, and the second term is of order $O_p(n^{-3/2}k_n^{1/2}n^{-1/2}n^{1/2}nn^{-1/2}) = o_p(1)$. Moreover, since $v_i$ is white noise, the third term is of order $O_p(n^{-1/2}k_n^{-1/2}n^{-1/2}n^{1/2}n^{-1/2}) = o_p(1)$.

\end{proof}

\subsection{Proof of Theorem \ref{thm.altern}}
\begin{proof} As in the proof of Theorem \ref{thm.null} we here prove the case of $p = 1$; the proof for $p > 1$ can be shown in a similar way. Under the additive alternative hypothesis in (\ref{Ha}) the residuals are

\bqa
\hat{v}_{t} &=&  \epsilon_t^2  - \hat{\alpha}_1\epsilon_{t-1}^2 - \hat{m}_\ell(\vX^{-\ell}_{t-1})\nonumber\\
&=&  \epsilon_t^2 - \alpha_1\epsilon_{t-1}^2 - m_\ell(\vX^{-\ell}_{t-1}) - (nk_n)^{-1/4}\tilde{m}_\ell(X_{it})  - (\hat{\alpha}_1 - \alpha_1)\epsilon_{t-1}^2 - (\hat{m}_\ell(\vX^{-\ell}_{t-1}) - m_\ell(\vX^{-\ell}_{t-1}))\\
&& + (nk_n)^{-1/4}\tilde{m}_\ell(X_{\ell t})\nonumber\\
&=&v_t -  (\hat{\alpha}_1 - \alpha_1)\epsilon_{t-1}^2 - ({\hat{\vtheta}}^{-\ell}-\vtheta^{-\ell})^T \mathcal{B}(\vX_{t-1}^{-\ell}) + (nk_n)^{-1/4}\tilde{m}_\ell(X_{\ell t}).
\eqa

Then, the statistic $T_n = MST_n - MSE_n$ can be written as
\bqan \label{VAV_alter}
T &=& \vV'\vA\vV = (\vv_V + \valpha_V + \vm_V)'\vA(\vv_V + \valpha_V + \vm_V) +  2(nk_n)^{-1/4}(\vv_V + \valpha_V + \vm_V)'\vA\tilde{\vm}_{iV}\nonumber\\&& +  ((nk_n)^{-1/4})^2\tilde{\vm}_{iV}'\vA\tilde{\vm}_{iV}\nonumber\\
&=& (\vv_V + \valpha_V + \vm_V)'\vA(\vv_V + \valpha_V + \vm_V)\nonumber\\
&& + 2(nk_n)^{-1/4}\vv_V'\vA\tilde{\vm}_{iV}
+ \valpha_V'\vA\tilde{\vm}_{iV}
+ \vm_V'\vA\tilde{\vm}_{iV} +  ((nk_n)^{-1/4})^2\tilde{\vm}_{iV}'\vA\tilde{\vm}_{iV}
\eqan
where $\tilde{\vm}_{iV}$ is defined as $\vv_V$ but with $\tilde{m}_i(X_{it})$ instead of $\hat{v}_{t}$.

By Theorem \ref{thm.null}, $n^{1/2}k_n^{-1/2}(\vv_V + \valpha_V + \vm_V)'\vA(\vv_V + \valpha_V + \vm_V)$ converges in distribution to a $N(0, 4\tau^2/3)$.

Consider now the second term on the right hand side of (\ref{VAV_alter}) and write
\bqan\label{v_m_tilde}
n^{1/2}k_n^{-1/2}(nk_n)^{-1/4}\vv_V'\vA\tilde{\vm}_{iV}
&=& 
 \frac{n^{1/4}}{k_n^{-3/4}} \frac{nk_n-1}{n(n-1)k_n(k_n-1)}\sum_{t=1}^n\sum_{k\in W_t}v_k\sum_{s\in W_t}\tilde{m}_\ell(X_{\ell s-1})\nonumber\\
&& 
+ \frac{n^{1/4}}{k_n^{-3/4}} \frac{k_n}{n(n-1)}\sum_{t=1}^nv_t\sum_{s=1}^n \tilde{m}_\ell(X_{\ell s-1}) + \frac{n^{1/4}}{k_n^{-3/4}} \frac{k_n}{n(k_n-1)}\sum_{t=1}^nv_t \tilde{m}_\ell(X_{\ell t-1})
\eqan

Using Lemma 1.0.2 in \citet{ZambomAkritas2014} and the Lipschitz continuity of $\tilde{m}_i(\cdot)$, the sum in the first term in (\ref{v_m_tilde}) is
\bqa
&&k_n\sum_{t=1}^n\left[\tilde{m}_\ell(X_{\ell t-1}) + O_p\left(\frac{k_n}{\sqrt{n}}\right)\right]\sum_{s\in W_t}v_s\\
&\leq&k_n\sum_{t=1}^n\left[\sum_{k\in W_t}\tilde{m}_\ell(X_{\ell k-1})\right]v_t + k_n^2O_p\left(\frac{k_n}{\sqrt{n}}\right)\sum_{t=1}^n|v_t| \\
&=&  k_n^2\sum_{t=1}^nv_t \tilde{m}_\ell(X_{t-1}) + O_p(k_n^3n^{1/2}).
\eqa
Furthermore, we find
\bqa
 &&n^{1/2}k_n^{-1/2}(nk_n)^{-1/4}\vv_V'\vA\tilde{\vm}_V\\
 &&  = \frac{n^{1/4}}{k_n^{-3/4}}\frac{nk_n}{n-1}\left[\left(\frac{1}{n}\sum_{t=1}^nv_t \tilde{m}_\ell(X_{\ell t-1})\right) - \left(\frac{1}{n}\sum_{t=1}^n\tilde{m}_\ell(X_{\ell t-1})\right)\left(\frac{1}{n}\sum_{t=1}^nv_t\right)\right]\\
 && 
 + \frac{n^{1/4}}{k_n^{-3/4}}O_p\biggl(\frac{k_n^2}
 {n^{1/2}}\biggr),
\eqa
which goes to 0 in probability, since $E(L) = 0$ and $Var(L) = O(n^{-1})$. Hence, $L = O_p(n^{-1/2})$, where $L = \left(n^{-1}\sum_{t=1}^nv_t \tilde{m}_\ell(X_{\ell t-1})\right) - \left(n^{-1}\sum_{t=1}^n\tilde{m}_\ell(X_{\ell t-1})\right)\left(n^{-1}\sum_{t=1}^nv_t\right)$.

The third and fourth terms on the right hand side of (\ref{VAV_alter}) can be shown to be $o_p(1)$ using similar steps. 

Now consider the last term in (\ref{VAV_alter}) and write
\bqan\label{m_tilde_m_tilde}
n^{1/2}k_n^{-1/2}(nk_n)^{-1/2}\tilde{\vm}_{V\ell}'\vA\tilde{\vm}_{V\ell}
&=& 
 \frac{n^{1/4}}{k_n^{-3/4}} \frac{nk_n-1}{n(n-1)k_n(k_n-1)}\sum_{t=1}^n\sum_{k\in W_t}\tilde{m}_\ell(X_{\ell k-1})\sum_{s\in W_t}\tilde{m}_\ell(X_{\ell s-1})\nonumber\\
&& 
+ \frac{n^{1/4}}{k_n^{-3/4}} \frac{k_n}{n(n-1)}\sum_{t=1}^n\tilde{m}_\ell(X_{\ell t-1})\sum_{s=1}^n \tilde{m}_\ell(X_{\ell s-1})\nonumber\\
&&+ \frac{n^{1/4}}{k_n^{-3/4}} \frac{k_n}{n(k_n-1)}\sum_{t=1}^n \tilde{m}^2_\ell(X_{\ell t-1})
\eqan
which can be written as
\bqa
\frac{n}{n-1}\left[\left(\frac{1}{n}\sum_{t=1}^n\tilde{m}_\ell^2(X_{\ell t-1})\right) - \left(\frac{1}{n}\sum_{t=1}^n\tilde{m}_\ell(X_{\ell t-1})\right)^2\right] 
+ O_p\biggl(\frac{k_n^2}{n^{1/2}}\biggr)
\pkonv Var(\tilde{m}_\ell(X_\ell)).
\eqa

For part b) of the Theorem note that the statistic $T_n = MST_n - MSE_n$ can be written as
\bqan \label{VAV_alterG}
T = \vV'\vA\vV &=& (\vv_V + \valpha_V + \vm_V + (nk_n)^{-1/4}\tilde{\vm}_{V\ell})'\vA(\vv_V + \valpha_V + \vm_V + (nk_n)^{-1/4}\tilde{\vm}_{V\ell})  \nonumber\\
&&+ 2(nk_n)^{-1/4}(\vv_V + \valpha_V + \vm_V + (nk_n)^{-1/4}\tilde{\vm}_{V\ell})'\vA\tilde{\vm}_V + (nk_n)^{-1/2}\tilde{\vm}_V'\vA\tilde{\vm}_V
\eqan
where $\tilde{\vm}_{V}$ is defined as $\vv_V$ but with $\tilde{m}(\vX_{t})$ instead of $\hat{v}_{t}$.
Using part a) and Theorem \ref{thm.null}, the first term on the right hand side of (\ref{VAV_alterG}) goes to $Var(\tilde{m}_i(X_i))$ in probability.

Consider the last term on the right hand side of (\ref{VAV_alterG}).
Using steps similar to those in part a) it is straightforward to show that 
\bqa
n^{1/2}k_n^{-1/2}(nk_n)^{-1/2}\tilde{\vm}_V'\vA\tilde{\vm}_V \pkonv Var(\tilde{m}(\vX_t)).
\eqa
Finally, consider the second term on the right hand side of (\ref{VAV_alterG}). Write
\bqan \label{v_alpha_m_m_i}
&& n^{1/2}k_n^{-1/2}2(nk_n)^{-1/4}(\vv_V + \valpha_V + \vm_V + (nk_n)^{-1/4}\tilde{\vm}_{V\ell})'\vA\tilde{\vm}_V \nonumber \\
&=& 
 \frac{n^{1/4}}{k_n^{3/4}} \frac{nk_n-1}{n(n-1)k_n(k_n-1)}\sum_{t=1}^n\sum_{k\in W_t}\left(v_k - (\hat{\alpha}_1 - \alpha_1)\epsilon^2_{k-1} - (\hat{m}_\ell(\vX^{-\ell}_{k-1}) - m_\ell(\vX^{-\ell}_{k-1})) - (nk_n)^{-1/4}\tilde{m}_\ell(X_{\ell k})\right)\times \nonumber \\
 && \hspace{4.8cm}
\times \sum_{\ell\in W_t}\tilde{m}(\vX_{\ell-1})\nonumber\\
&& 
+ \frac{n^{1/4}}{k_n^{3/4}} \frac{k_n}{n(n-1)}\sum_{t=1}^n\left(v_t - (\hat{\alpha}_1 - \alpha_1)\epsilon^2_{t-1} - (\hat{m}_\ell(\vX^{-\ell}_{t-1}) - m_\ell(\vX^{-\ell}_{t-1})) - (nk_n)^{-1/4}\tilde{m}_\ell(X_{\ell t})\right)\sum_{s=1}^n \tilde{m}(\vX_{\ell s-1})  \nonumber\\
&& + 
\frac{n^{1/4}}{k_n^{3/4}} \frac{k_n}{n(k_n-1)}\sum_{t=1}^n\left(v_t - (\hat{\alpha}_1 - \alpha_1)\epsilon^2_{t-1} - (\hat{m}_\ell(\vX^{-\ell}_{t-1}) - m_\ell(\vX^{-\ell}_{t-1})) - (nk_n)^{-1/4}\tilde{m}_\ell(X_{\ell t})\right) \tilde{m}(\vX_{t-1}).
\eqan
Note that $n^{-3/4}k_n^{1/4}\sum_{t=1}^n \left(v_t - (\hat{\alpha}_1 - \alpha_1)\epsilon^2_{t-1} - (\hat{m}_\ell(\vX^{-\ell}_{t-1}) - m_\ell(\vX^{-\ell}_{t-1}))\right)\tilde{m}(\vX_{t-1}) \pkonv 0$ and \\$n^{-1}k_n^{-3/4}\sum_{t=1}^n\tilde{m}_\ell(X_{\ell t})\tilde{m}(\vX_{t-1}) \pkonv 0$
so that the last term on the right hand side of (\ref{v_alpha_m_m_i}) goes in probability to 0.

Consider the second term on the right hand side of (\ref{v_alpha_m_m_i}). Note that $n^{-1}\sum_{t=1}^n\tilde{m}(\vX_{t-1}) \pkonv E(\tilde{m}(\vX))$ but $n^{-3/4}k_n^{1/4}\sum_{t=1}^n\left(v_t - (\hat{\alpha}_1 - \alpha_1)\epsilon^2_{t-1} - (\hat{m}_\ell(\vX^{-\ell}_{t-1}) - m_\ell(\vX^{-\ell}_{t-1})) - (nk_n)^{-1/4}\tilde{m}_\ell(X_{\ell t})\right) \pkonv 0$, so that the second term on the right hand side of (\ref{v_alpha_m_m_i}) goes to 0. The first term on the right hand side of (\ref{v_alpha_m_m_i}) goes to 0 in probability because $n^{-3/4}k_n^{1/4}\sum_{t=1}^nv_t\tilde{m}(\vX_{t-1})$, or any other term instead of $v_t$ goes to 0 in probability. This completes the proof.

\end{proof}

\bibliographystyle{apalike}
\bibliography{mybibfile}

\end{document}